\title{A Deformation Quantization for  Non-Flat Spacetimes and   Applications to QFT}
\author[1]{Albert Much\footnote{much@itp.uni-leipzig.de}}  
\affil[1]{Institut f\"ur Theoretische Physik\\ Universit\"at Leipzig\\ D-04103 Leipzig}  
\renewcommand{\vec}[1]{{\ifnum9<1#1\mathbf{#1}\else\ifcat\noexpand#1\relax\boldsymbol{#1}\else\mathbfi{#1}\fi\fi}}
\newcommand{\mathe}{\mathrm{e}}
\newcommand{\mathi}{\mathrm{i}}
\let\oldre\Re
\let\oldim\Im
\renewcommand{\Re}{\oldre\mathfrak{e}\,}
\renewcommand{\Im}{\oldim\mathfrak{m}\,}
\newcommand{\total}{\mathop{}\!\mathrm{d}}
\newcommand{\abs}[1]{{\left\lvert{#1}\right\rvert}}
\newcommand{\norm}[1]{{\left\lVert{#1}\right\rVert}}
\newcommand{\eqend}[1]{\,#1}
\newcommand{\ma}{\mathcal{M}}
\newcommand{\tx}{T_X}
\newcommand{\gm}{\mathcal{G}_{\mathcal{M}}}
\newcommand{\pz}{\text{Proj}_Z} 
\newcommand{\pza}{\text{Proj}_{Z_1}} 
\newtheorem{theorem}{\textsc{Theorem}}[section]
\newtheorem{lemma}[theorem]{\textsc{Lemma}}
\newtheorem{proposition}[theorem]{\textsc{Proposition}}
\newtheorem{corollary}[theorem]{\textsc{}Corollary}
\newtheorem{definition}[theorem]{\textsc{Definition}}
\newtheorem{remark}[theorem]{Remark}
\newtheorem{example}[theorem]{Example}
\newcommand{\R}{\mathbb{R}}
\newcommand{\C}{\mathbb{C}}
\newcommand{\N}{\mathbb{N}}
\numberwithin{equation}{section} 
\begin{document}
	\maketitle
	\abstract{We provide a deformation quantization, in the sense of Rieffel, for \textit{all} globally hyperbolic spacetimes with a Poisson structure. The Poisson structures have to satisfy Fedosov type requirements in order for the deformed product to be associative. We apply the novel deformation to quantum field theories and their respective states and we prove that the deformed state (i.e.\ a state in non-commutative spacetime) has a singularity structure resembling Minkowski, i.e.\ is \textit{Hadamard}, if the undeformed state is Hadamard. This proves that the Hadamard condition, and hence the quantum field theoretical implementation of the equivalence principle is a general concept that holds in spacetimes with quantum features (i.e. a non-commutative spacetime).}
	\tableofcontents
	
	\section{Introduction}
	Quantum field theory (\textbf{QFT}) in curved spacetimes (\textbf{CST}) (see monographs \cite{FU, WQ, CF, BD, GBP} and \cite{K88}) is regarded as a zero-order approximation of a complete theory of quantum gravity. The approximation pertains to the treatment of gravity (or spacetime) as a classical entity rather than a quantum one, as in general relativity.
	
	Even in this approximation, the framework has already provided various insights and achievements, such as thermal radiations by black holes (\cite{IN1,WPC,FH1,K88}), the micro-local spectrum condition (\cite{RAD, RAD2, IN3}), Pauli's spin-statistic theorem in curved spacetime (\cite{IN4}), quantum inequalities (\cite{FRP,FRP2, IN5,IN6,IN7,IN9, IN8}), perturbation theory (\cite{INN1,INN2,INN3,INN5}), descriptions of the early universe and extreme astrophysical environments \cite{THCOS} (and references therein).
	
	Moreover, it has led to questioning the foundations of physics, such as the very notion (or existence) of particles (Fulling-Davies-Unruh effect, \cite{INN7,INN4,INN6,DPE}).
	
	The states generalizing the Minkowski vacuum in QFT in CST are the so-called Hadamard states. These states exhibit the Hadamard singularity structure and exist for a large class of quantum field theories in globally hyperbolic spacetimes, \cite{HS1, SV01}, see also \cite{GW12}. An essential reason to work with them is the lack of a vacuum (lack of an isometry group\footnote{In flat spacetime, this is the Poincar\'e group, which leaves the vacuum invariant.}) in curved spacetimes. Motivated by the equivalence principle, one demands states resembling the same singularity structure as the Minkowski vacuum at short distances.
	
	Due to the foundational importance of QFT in CST and the zero-order approximation argument, it is clear that any fundamental and complete theory of quantum gravity should at least contain QFT in CST in some limit. In addition, this theory should become important where the domain of validity of QFT in CST breaks down. This breakdown occurs at distances (or curvatures) near the Planck scales, \cite[Chapter 1]{WQ}.
	
	It is exactly around those distances that the hypothesis of a quantum (or noncommutative) spacetime (geometry) comes into play as the solution to the geometrical measurement problem, see Refs.\cite{Ahluwalia1993,DFR}. The problem arises due to limitations of localization of spacetime points, resulting from combining the quantum-mechanical uncertainty principle with the formation of black holes in general relativity. An $n$-dimensional quantum spacetime is defined as an algebra $\mathcal{V}$, generated by elements $\hat{x}$, that obey
	\begin{equation}
		\label{eq:commink}
		\left[ \hat{x}^\mu, \hat{x}^\nu \right] = 2 \mathi \Theta^{\mu\nu},
	\end{equation}
	where $\Theta$ is a real, constant, skew symmetric $n\times n$ matrix. By representing the generating elements $\hat{x}$ on a Hilbert space, one derives an uncertainty principle for the coordinates of spacetime. In \cite{DFR}, the authors defined a quantum field theory in noncommutative Minkowski space, by defining the representation space of the QFT to be the tensor product of $\mathcal{V} \otimes \mathscr{H}$, where $\mathscr{H}$ is the Hilbert space of the quantum field under consideration. Next, the authors in Ref. \cite{GL1} constructed a unitary operator that maps the tensor product space $\mathcal{V} \otimes \mathscr{H}$ to the Hilbert space $\mathscr{H}$, which is a strict deformation of QFT called warped convolutions~\cite{BS,BLS}.
	
	This prescription is connected to the Rieffel product~\cite{R} (also known as the Moyal-Weyl-Groenewold), which is a noncommutative associative product of functions provided by
	\begin{equation}
		\label{eq:rfp}
		f \star_\theta g = (2\pi)^{-n}\iint \alpha_{\Theta x}(f) \alpha_y(g) \mathe^{\mathi x \cdot y} \total^n x \total^n y \eqend{,}
	\end{equation}
	where $\alpha$ denotes the action of the translation group $\R^n$, i.e.\ $\alpha_a (\phi)(x) = \phi(x+a)$.
	The oscillatory integral is well defined for smooth functions of compact support $f,g\in C_0^{\infty}(\mathbb{R}^n)$. To introduce perturbations stemming from the \textit{non-commutative spacetime}, one replaces the pointwise product by the Rieffel product. This procedure is equal to defining the $n$-point functions for a QFT in non-commutative Minkowski space, as done in \cite{OECK1,OECK2}, using the fields in \cite{GL1}.
	
	The Rieffel deformation, yet fruitful, is generally inapplicable to curved spacetimes due to the essential requirement of smooth actions of $\mathbb{R}^n$, i.e., mutually commutative generators. In rare cases, one can use commuting Lie vector fields~\cite{Mor} or a spacetime with commuting Killing vector fields~\cite{DFM1}.
	
	Recently, this limitation was circumvented for the De Sitter spacetime by using the embedding formalism. The idea was to embed an $n$-dimensional curved spacetime manifold into a higher dimensional Minkowski spacetime, thus supplying us with an $n$-dimensional group of translations with which we define the Rieffel product, \cite{MF}.
	
	In what follows, we first use \cite[Theorem 1.1]{MS10}, which provides an isometric embedding for all globally hyperbolic spacetimes into a higher dimensional Minkowski spacetime $\mathbb{L}^N$. Then, we use those embeddings together with recent techniques from optimization problems on smooth manifolds, see \cite{mahony, boumaye}, to define the equivalent of translations (including the exponential map), namely retractions on those manifolds. The retractions form a group, and using this group structure, we define a generalized Rieffel product for curved manifolds. This product is parameterized by a deformation parameter, and it reduces to the well-established Rieffel product in the flat case. Furthermore, at   first order in the deformation parameter, the generalized deformed product aligns with the Poisson bracket. We prove that under a certain assumption, namely a Fedosov type requirement on the Poisson structure $\pi$, i.e.\ $\nabla \pi=0$, that the product is associative up to second order in the deformation parameter.
	
	Since our main motivation to study these deformations lies in the application to QFT in curved spacetimes, and in particular to the two-point function, we generalize the deformed product to two points.
	
	In the final section, we demonstrate that if a state satisfies the microlocal spectrum condition—implying, according to the Radzikowski theorem, that it is Hadamard (possessing the same singularity structure as in Minkowski spacetime), then the deformed state also satisfies this condition. Consequently, we conclude that the deformed state is Hadamard. Moreover, beyond the aforementioned results, we establish that the Hadamard property is independent of the chosen isometric embedding and/or retraction.
	
	\section{Mathematical Preliminaries} 
	We start with basic definitions  needed for the following results.
	\subsection{Embedding Formalism}
	In Ref.~\cite{MS10} the authors proved the existence of an isometric embedding of any globally hyperbolic manifold in to a higher dimensional  Minkowski space $\mathbb{L}^N$. First, we define a smooth embedded submanifold of a linear space $\mathcal{E}$ of dimension $N$, \cite[Defintion 3.10.]{boumaye}.\footnote{For example the  Minkowski space $\mathbb{L}^N$ and the Euclidean space $\R^n$  fall into this category.} 
	\begin{definition}\label{def1}(\textbf{Embedded submanifold})
		Let $\mathcal{E}$ be a linear space of dimension $N$. A non-empty subset $\mathcal{M}$ of $\mathcal{E}$ is a (smooth) embedded submanifold of $\mathcal{E}$ of dimension $d=N-k$ if for some $k\geq1$ and, for each $X\in\mathcal{M}$, there exists a neighbourhood $U$ of $X$ in $\mathcal{E}$ and a smooth function $F:U\rightarrow \R^k$ such that $\,$ \newline
		\begin{enumerate}
			\item if $X$ is in $U$, then $F(X)=0$ if and only if $X\in \mathcal{M}$; and  $\,$ \newline 
			\item $\text{rank}\,\,DF(X)=k$.$\,$ \newline 
		\end{enumerate}
		Such a function $F$ is called a \textbf{local defining function} for $\mathcal{M}$ at $X$.
	\end{definition}

	The Nash embedding theorem states that every Riemannian manifold (compact or not) can be isometrically embedded in $\mathbb{R}^N$ for some large enough $N$. The authors in Ref.~\cite{MS10}   extended this result to Lorentzian manifolds that are stably causal with a steep temporal function, i.e.\ a smooth function $\tau$ such that $g(\nabla\tau,\nabla\tau)\leq-1$. In particular, one can prove that \textit{any} globally hyperbolic spacetime $(\mathcal{M},g)$ admits a steep temporal function $\tau$ and thus a Cauchy orthogonal decomposition, i.e.\
	\begin{align}\label{eqghst}
		\mathcal{M}=\R\times\Sigma,\qquad g=-\beta d\tau^2+g_{\tau},
	\end{align}
	where $\beta>0$ is a function on $\mathcal{M}$, $g_{\tau}$ is a Riemannian metric on $\Sigma_{\tau}:=\{\tau\}\times\Sigma$ smoothly varying with $\tau$, with upper bounded function $\beta$.
	And it is exactly in those spacetimes in which quantum field theories are rigorously defined.  The theorem that establishes the embedding is the following, \cite[Theorem 1.1]{MS10}.
	
	\begin{theorem}\label{thm:embsm}
		Let $(\mathcal{M},g)$ be a  Lorentzian manifold. The
		following assertions are equivalent:$\,$\\
		\begin{enumerate}
			\item $(\mathcal{M},g)$ admits a  isometric embedding in $\mathbb{L}^N$ for some $N\in \N$.$\,$\\
			\item  $(\mathcal{M},g)$ is a stably causal spacetime with a {\em steep
				temporal function}, i.e., a smooth function $\tau$ such that 
			$g(\nabla \tau, \nabla \tau)\leq -1$. 
		\end{enumerate}
		
	\end{theorem}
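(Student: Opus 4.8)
The plan is to prove the two implications $(1)\Rightarrow(2)$ and $(2)\Rightarrow(1)$ separately, using the ambient Minkowski time coordinate as the common thread in both directions. The reason the statement has content is that a general pseudo-Riemannian isometric embedding theorem would not control the number of ambient timelike directions, whereas here a \emph{single} one must suffice; this is exactly where the steepness hypothesis will enter. The implication $(1)\Rightarrow(2)$ is essentially a linear-algebra statement about submanifolds of $L^N$, while $(2)\Rightarrow(1)$ carries the work: there I would combine a pointwise positivity estimate coming from steepness with the Nash isometric embedding theorem recalled above.

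For $(1)\Rightarrow(2)$, let $\iota\colon M\hookrightarrow L^N=(\R^N,\eta)$ be an isometric embedding and let $t$ be the standard time coordinate on $L^N$, so that $\eta(\nabla t,\nabla t)=-1$ everywhere. Put $\tau:=t\circ\iota$. Since $g=\iota^{\ast}\eta$ is Lorentzian, each $T_pM\subset T_{\iota(p)}L^N$ is a Lorentzian subspace and hence contains a timelike vector; the $\eta$-orthogonal complement of a subspace of $L^N$ containing a timelike vector is positive definite, so $(T_pM)^\perp$ is spacelike. Decomposing $\nabla t=(\nabla t)^\top+(\nabla t)^\perp$ into parts tangent and normal to $\iota(M)$, one checks that $\nabla^g\tau=(\nabla t)^\top$, whence
\begin{equation*}
g(\nabla^g\tau,\nabla^g\tau)=\eta\big((\nabla t)^\top,(\nabla t)^\top\big)=\eta(\nabla t,\nabla t)-\eta\big((\nabla t)^\perp,(\nabla t)^\perp\big)\le -1 ,
\end{equation*}
using $\eta(\nabla t,\nabla t)=-1$ and $\eta((\nabla t)^\perp,(\nabla t)^\perp)\ge 0$. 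The tangential projection of $\partial_t$ is likewise timelike everywhere, so $M$ is time orientable, and then $\tau$ (or $-\tau$) is a temporal function whose gradient additionally obeys the steepness bound. Since admitting a temporal function is equivalent to stable causality, assertion (2) follows.

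For $(2)\Rightarrow(1)$, assume $\tau$ is a steep temporal function, $g(\nabla\tau,\nabla\tau)\le -1$, and fix any constant $c>1$. The main claim is that $h_c:=g+c^2\,d\tau^2$ is a smooth Riemannian metric on $M$. Smoothness is clear. For positivity, note that at each point $T_pM$ is the $g$-orthogonal direct sum of $\R\,\nabla\tau$ (timelike) and $\ker d\tau_p$ (spacelike, being the $g$-orthogonal complement of a timelike vector). Writing $v=\lambda\,\nabla\tau+w$ with $w\in\ker d\tau_p$ and setting $u:=-g(\nabla\tau,\nabla\tau)\ge 1$, a direct computation gives
\begin{equation*}
h_c(v,v)=\lambda^2\,u\,(c^2u-1)+g(w,w)\ \ge\ \lambda^2(c^2-1)+g(w,w)\ >\ 0\qquad\text{for }v\neq 0 ,
\end{equation*}
where the uniform bound $u\ge 1$ and $c>1$ are both used. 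I would then invoke the Nash theorem to obtain a closed isometric embedding $\psi=(\psi_1,\dots,\psi_k)\colon(M,h_c)\hookrightarrow\R^k$, and set $\Phi:=(c\,\tau,\psi_1,\dots,\psi_k)\colon M\to L^{k+1}$. Pulling back the Minkowski metric,
\begin{equation*}
\Phi^{\ast}\bigl(-dt^2+\sum_{j=1}^{k}(dx^j)^2\bigr)=-c^2\,d\tau^2+\psi^{\ast}g_{\mathrm{eucl}}=-c^2\,d\tau^2+h_c=g ,
\end{equation*}
and $\Phi$ is an embedding because $\psi$ already is (injectivity, the immersion property, and the homeomorphism property are inherited, since $\tau$ is recovered continuously from $\psi$). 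Thus $(M,g)$ embeds isometrically in $L^N$ with $N=k+1$.

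I expect the step most in need of care to be the positive definiteness of $h_c$, and the reason it is delicate is precisely the reason the two conditions are equivalent: a merely temporal function only guarantees $u=-g(\nabla\tau,\nabla\tau)>0$ \emph{pointwise}, which may decay to $0$, so that no single finite $c$ renders $h_c$ Riemannian and the construction cannot close up in finitely many dimensions — it is the \emph{uniform} bound $u\ge 1$ encoded in steepness that forces $N$ to be finite. (This also explains why stable causality alone does not imply embeddability into any $L^N$.) A secondary point requiring attention is the positive definiteness of $(T_pM)^\perp$ in $(1)\Rightarrow(2)$, which hinges on $T_pM$ containing a timelike vector, and — if one insists that $\iota$ be a closed (proper) embedding — the corresponding strengthening of Nash's theorem.
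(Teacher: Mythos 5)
Your proof is correct. Note that the paper does not prove this statement at all — it is imported verbatim as Theorem 1.1 of the cited reference [MS10] (M\"uller--S\'anchez) — and your argument is essentially the original one from that reference: restriction of the ambient Minkowski time coordinate plus the orthogonal splitting $\nabla t=(\nabla t)^{\top}+(\nabla t)^{\perp}$ for $(1)\Rightarrow(2)$, and for $(2)\Rightarrow(1)$ the observation that steepness makes $g+c^2\,d\tau^2$ (with any fixed $c>1$) a Riemannian metric, to which Nash's theorem is applied before appending $c\,\tau$ as the single timelike coordinate. Your closing remark correctly identifies the only delicate points, namely the uniformity of $u\ge 1$ and the need for a closed version of Nash's embedding if one wants $\iota$ proper.
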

	
	\begin{definition}\label{def:ex}
		Let $\iota_{M,g}$ denote the isometric embedding of a globally hyperbolic manifold $(\mathcal{M},g)$ in   $\mathbb{L}^N$. 
	\end{definition} 
	
	\subsection{Moving  in the Embedded Spacetime} For the following considerations we need the concept of tangent spaces w.r.t\ points on the embedded manifolds, \cite[Theorem 3.15.]{boumaye}.
	\begin{theorem}
		Let $\mathcal{M}$ be an embedded submanifold of $\mathcal{E}$. Then, the tangent space $T_X\mathcal{M}$ is given as the kernel of the differential of any local defining function $F$ at $X$,
		\begin{equation}
			T_X\mathcal{M}=\text{ker}\, DF(X).
		\end{equation}
	\end{theorem}

	We endow the  embedded manifolds with a metric by using the fact that each tangent space $\tx\ma$ is a linear subspace of $(\mathbb{R}^N,\eta)$ or $(\R^N,\delta)$. Hence, we can define the metric product (not necessarily positive definite) $$(\cdot,\cdot)_X:\tx\ma\times\tx\ma\rightarrow\R,$$  by the restriction of the metric product in $\mathbb{L}^N$ or $\R^N$ to $\tx\ma$. 
	Next, we define the concept of the projector, see \cite[Definition 3.60.]{boumaye}. The idea of a projector is to define gradients or tangent vectors, by natural definitions in the embedding manifold  and projecting those quantities back to the embedded manifold.  
	\begin{definition}[\textbf{Orthogonal Projector}]\label{defproj}
		Let $\mathcal{M}$ be an embedded submanifold of a (Pseudo-)Euclidean space $\mathcal{E}$ equipped with a (Pseudo-)Euclidean metric $(\cdot,\cdot)$. The orthogonal projector to $\tx\ma$ is the linear map $\text{Proj}_X:\mathcal{E}\rightarrow\mathcal{E}$ characterized by the following properties: $\,$\newline
		\begin{enumerate}
			\item Range: ${im}({Proj}_X)=T_X\ma$;$\,$\newline
			\item Projector: ${Proj}_X\circ{Proj}_X={Proj}_X$;$\,$\newline
			\item  Orthogonal $(u-Proj_X(u),v)=0$ for all $v\in\tx\ma$ \label{it3} and $u\in\mathcal{E}$.
		\end{enumerate}
	\end{definition}
	\begin{lemma}\label{lem:orthproj}
		The orthogonal projector, which is a smooth map \cite[Lemma 3.1.2]{salamon}, can be represented by a matrix $P\in\R^{N\times N}$ 
		\begin{align*}
			\pz(V^A)=P^{A}_{\,\,C}(Z)\,V^C.
		\end{align*}
		which is uniquely determined by the conditions (see \cite[Proposition 3.63]{boumaye}, \cite[Chapter 3, Equation 3.1.2, Equation 3.1.3]{salamon})
		\begin{align*}
			P(Z)=P(Z)^2= P(Z)^T,
		\end{align*} 
		for $Z\in \ma$ and $V \in T_Z\ma$,
		\begin{align*}
			P (Z)V=V.
		\end{align*} 
		
	\end{lemma}
	
	Using the orthogonal projector we have the following (extended) result, see \cite[Proposition 3.61., Equation (3.36)]{boumaye}.
	\begin{proposition}\label{propgrad}
		Let $(\ma,g)$ be a (Pseudo-)Riemannian submanifold of $(\mathcal{E}, (\cdot,\cdot)))$  and let
		$f:\ma\rightarrow\R$ be a smooth function. The (Pseudo-)Riemannian gradient
		of $f$ is defined by
		\begin{align*}
			\nabla_A f(X)=\text{Proj}_X (\partial_A \bar{f}(X)),
		\end{align*} 
		where $ \bar{f}$ is any smooth extension of $f$ to a neighborhood of $\ma$ in $\mathcal{E}$ and $\partial$ denotes the  (Pseudo-)Euclidean gradient w.r.t.\ the coordinate $X$. Moreover, $\forall(X,V)\in T\mathcal{M}$
		\begin{align}
			(V,\nabla f )_{X}=    (V,\partial \bar{f} ).
		\end{align} 
		
	\end{proposition}  
	Thus, the scalar product with a tangent vector induces the following relation and useful definition 
	\begin{align}\label{eq:not}
		\nabla_Vf= V^A\nabla_Af=V^B\partial_B\bar{f} =:V^B\frac{\partial f }{\partial X^B} =V^B\partial_B{f} .
	\end{align}

	To consider terms involving second derivatives (Hessian) we introduce in the for submanifolds of $\mathcal{E}$ the concept of a connection,  see \cite[Equation 5.4]{boumaye} and \cite[Theorem 5.2]{boumaye}

	\begin{definition}\label{def:conn}(\textbf{Connection})
		If  a (Pseudo-)Riemannian manifold  $\ma$  is a submanifold of the linear vector space $\mathcal{E}$, we define the connection as 
		\begin{align}
			\nabla_U V&= 
			\text{Proj}_X( \bar{\nabla}_U\bar{V}(X) ),
		\end{align}
		where $V$ is a smooth vector field on $\ma$, $U\in\tx\ma$ is a tangent vector at the point $X$, $\bar{V}$ is any smooth extension of $V$ and $\bar{\nabla}$ is the connection on $\mathcal{E}$. 
	\end{definition}
	This connection is the (Pseudo-)Riemannian connection on $\ma$, see  \cite[Theorem 5.9]{boumaye}.   
	The following result allows us to express the Hessian more explicitly, \cite[Corollary 5.16.]{boumaye}.
	\begin{corollary}
		Let $\ma$  be a  (Pseudo-)Riemannian submanifold of a  (Pseudo-)Euclidean space.
		Consider a smooth function $f:\ma\rightarrow\R$. Let $\bar{G}$ be a smooth extension of $\nabla f$—
		that is, $\bar{G}$ is any smooth vector field defined on a neighborhood of $\ma$ in the
		embedding space such that $\bar{G}(X)=\nabla f(X)$ for all $X\in\ma$. Then,
		\begin{align}
			\text{Hess}f(X)[U]=Proj_X(\bar{\nabla}_U\bar{G}(X)  ).
		\end{align} 
	\end{corollary}
	Using the former Definition and Corollary, we can write the Hessian as 
	\begin{align}\nonumber
		\nabla_U\nabla_B f(X)&=Proj_X(U^A\partial_A \nabla_B f(X)  )\\&=
		P_B^{\,\,C}(U^A\partial_A \nabla_C f(X)  ).
	\end{align}

	In order to define the Rieffel deformation we want to move away from a spacetime point $x\in\mathcal{M}$ along a tangent vector, while remaining on the manifold $\mathcal{M}$. For that reason, we introduce the concept of retraction at the point $x\in\mathcal{M}$, see \cite[Definition 3.47.]{boumaye} and \cite[Definition 4.1.1]{mahony}
	\begin{definition}
		A retraction on a  manifold $\mathcal{M}$ is a smooth map $R$ from the tangent bundle $T\mathcal{M}$ onto the manifold $\mathcal{M}$  
		\begin{align*}
			R:T\mathcal{M}\rightarrow\mathcal{M}:(x,v)\mapsto R_x(v),
		\end{align*} 
		such that each curve $c(t) = R_x (tv)$ satisfies $c(0) = x$ and $c'(0) = v$.
	\end{definition}

	If $\mathcal{M}$ is an embedded manifold of a vector space $\mathcal{E}$, the retraction can be understood as follows. Since $T_X\mathcal{M}$ is a linear subspace of $T_X\mathcal{E}$ which is identified with  $\mathcal{E}$, we can consider the sum $X+\xi_X$, where $X\in\mathcal{M}$ viewed as a point in $\mathcal{E}$ and $\xi_X\in T_X\mathcal{E}$. The sum is well defined as an element in $\mathcal{E}$ (with a slight abuse of notation). Projecting that element back to the embedded manifold $\mathcal{M}$ gives the sum as an element in $\mathcal{M}$. This procedure defines a  retraction. If the manifold $\ma$ is a Riemannian or pseudo-Riemannian manifold it admits a (second-order) retraction that is defined by the exponential mapping, see \cite[Definition 5.42. and Proposition 10.18]{boumaye}. Yet, this is computationally challenging, and hence we give in the following an algorithm given in \cite[Section 4.1.1.]{mahony} to calculate a retraction that is more natural in the aspect of translations. This is done to keep close to the original definition of Rieffel, i.e. for the case where the manifold is Minkowski, we should get the definition of the deformed integral via translations. 
	
	\begin{definition}
		A second-order retraction $R$ on a (Pseudo-)Riemannian manifold $\ma$ is
		a retraction such that, for all $x\in\ma$ and all $v\in T_x\ma$, the curve $c(t) = R_x(tv)$ 
		has zero acceleration at $t = 0$, that is, $c''(0) = 0$.
	\end{definition}
	Using the gradients, hessians, projectors and the second-order retraction  we have a Taylor expansion on curves given by, \cite[Proposition 5.44.]{boumaye}
	\begin{align}
		f(R_x(v))=f(x)+(v,\nabla\,f(x))_x+
		\frac{1}{2}(v,\nabla_v\nabla f(x))_x+
		\mathcal{O}(v^3).
	\end{align}
	\begin{definition}
		Let $(\ma,g)$ be a globally hyperbolic spacetime, and let $T\ma$ denote the tangent bundle of the manifold. Then, we denote by $\mathcal{G}_{\ma}$ the set of transformations that are generated by second-order retractions $R:T\ma\rightarrow\ma$. 
	\end{definition}
	\begin{definition}
		The set $\mathcal{G}_{\ma}$ forms a group by the natural identification of the group decomposition denoted by the symbol $\circ$
		\begin{align}
			R_{R_X(V)}(U)  =  R_X(V\circ U),
		\end{align}
		for $X\in\ma$ and $V\in\tx\ma,U\in T_{R_X(V)}\ma$. 
		The associativity of the composition rule is a direct consequence of the associativity inherent in the composition of maps and the identity element follows   by a property of the retraction $R_X(0)=X$.
	\end{definition}
	
	\begin{remark}
		The definition of the group $\mathcal{G}_{\ma}$ is sensible since one could build retractions from the exponential map, which under certain conditions can be considered as a map $exp:{\mathfrak {g}}\rightarrow G$ from a Lie algebra ${\mathfrak {g}}$ (thought of as the tangent space to the identity element of the group $G$)  to a Lie group $G$.    See, for example, \cite{mueck} for the proof of the group properties w.r.t.\ the exponential map.
	\end{remark}


\subsection{Poisson Structure}
A Poisson manifold  is defined by $(\ma,\{\cdot,\cdot\})$, where $\ma$ is smooth manifold  and $\{\cdot,\cdot\}$ is a Poisson bracket  that can be equivalently represented by a Poisson bivector, see \cite[Chapter 1.1]{poison} or \cite[Proposition 4.1.20]{waldpoiss}.
\begin{definition}\label{def:poisbivec}
	A \textit{Poisson bivector} on a smooth manifold  $\ma$ is a smooth bivector field $\pi\in \Gamma^{\infty}(\Lambda^2(T\ma))$, i.e.\ $\pi$ is a smooth skew-symmetric tensor,   where $\Lambda^2$ denotes the exterior  power of the tangent bundle and $\Gamma^{\infty}$ denotes the space of smooth sections,  satisfying the non-linear partial differential equation \cite[Equation  1.5]{poison} (Jacobi identity) 
	\begin{align*}
		{\pi}^{AB}  \partial_A {\pi}^{CD}+  {\pi}^{AC}  \partial_A  {\pi}^{DB}+  {\pi}^{AD}  \partial_A  {\pi}^{BC}=0.
	\end{align*} 
	Choosing local coordinates $(U,X)$, any Poisson bivector is given by $$
	{\displaystyle \pi _{\mid U}=\frac{1}{2}\sum _{A,B}\pi ^{AB}{\frac {\partial }{\partial X^{A}}}\wedge{\frac {\partial }{\partial X^{B}}}.}$$ The connection between the Poisson bracket and the Poisson bivector is provided by 
	\begin{equation*}
		\{f,g\}_{\pi}= \sum _{A,B}\pi ^{AB} {\partial_A f } \,{\partial_B g}.
	\end{equation*}
\end{definition}  \begin{remark}
	With a slight abuse of notation   we drop the bar symbol denoting the smooth extensions, where there is no risk of confusion.   
\end{remark} 
In the following result we make a connection between the product of a Poisson bivector and the orthogonal projector. 
\begin{lemma}\label{lem:thetproj}
	For a smooth bivector field $\pi\in \Gamma^{\infty}(\Lambda^2(T_Z\ma))$ and the orthogonal projector the following relations hold, 
	\begin{equation*}
		\pi^{AB}(Z)P_{B}^{\,\,C}(Z)= \pi^{AC}(Z),\qquad\qquad        \pi^{BD}(Z)P_{B}^{\,\,C}(Z)= \pi^{CD}(Z) ,
	\end{equation*} 
	\begin{equation*}
		\pi^{AD}(Z) \pi^{BC}(Z) \partial_CP_{D}^{\,\,E}(Z)=    \pi^{AC}(Z) \pi^{BD}(Z) \partial_CP_{D}^{\,\,E}(Z).
	\end{equation*}
\end{lemma}
\begin{proof}
	We use the orthogonality condition given in Definition \ref{defproj} and write an element $V\in T_Z\ma$ as 
	\begin{align*}
		V^A=\pi^{AC}W_C.
	\end{align*}
	Then, the orthogonality condition for any element $U\in\R^N$ reads 
	\begin{align*}
		(U_A-P_{A}^{\,\,B}U_B)\pi^{AC}W_C=0,
	\end{align*}
	which reads
	\begin{align*}
		( \pi^{AC}  -P_{B}^{\,\,A}\pi^{BC})U_AW_C=0,
	\end{align*}
	since this holds for all $U\in\R^N$ and $W\in T_Z\ma$, the proof of the first statement  is concluded. By the skew-symmetry of the Poisson bivector and the first relation,  the second statement is proved, 
	\begin{align*}
		\pi^{BC}P_{B}^{\,\,A}=-\pi^{CB}P_{B}^{\,\,A}=-\pi^{CA}=\pi^{AC}.
	\end{align*}
	For the second part, we use the first part of the lemma and the Jacobi identity, i.e.\ we write the equation as
	\begin{align*}
		\pi^{AD}  \pi^{BC} \partial_CP_{D}^{\,\,E} &=    \pi^{AC}  \pi^{BD} \partial_CP_{D}^{\,\,E} 
		\\  \pi^{BC}\partial_C (  \pi^{AD}P_{D}^{\,\,E})- \pi^{BC}P_{D}^{\,\,E} \partial_C   \pi^{AD}&=    \pi^{AC}   \partial_C(\pi^{BD}P_{D}^{\,\,E} )- \pi^{AC} P_{D}^{\,\,E} \partial_C\pi^{BD} \\
		\pi^{BC}\partial_C   \pi^{AE}  - \pi^{BC}P_{D}^{\,\,E} \partial_C   \pi^{AD}&=    \pi^{AC}   \partial_C \pi^{BE}  - \pi^{AC} P_{D}^{\,\,E} \partial_C\pi^{BD},
	\end{align*}
	Rearranging the terms we have, 
	\begin{align*} 
		\pi^{BC}\partial_C   \pi^{AE}+\pi^{AC}   \partial_C \pi^{EB}   &= P_{D}^{\,\,E} ( \pi^{BC}\partial_C   \pi^{AD}    + \pi^{AC}   \partial_C\pi^{DB}).
	\end{align*}   
	Next, we use the Jacobi identity to which the equation reduces to
	\begin{align*}
		-  \pi^{EC}\partial_C   \pi^{BA}= -P_{D}^{\,\,E} \pi^{DC}   \partial_C\pi^{BA}=-\pi^{EC}   \partial_C\pi^{BA}.
	\end{align*}Reading the equalities from the bottom to the top we conclude the proof.
\end{proof} 
\begin{lemma}\label{lemmpi0}
	The covariant derivative of the Poisson bivector field $\pi\in \Gamma^{\infty}(\Lambda^2(T_Z\ma))$ is given by 
	\begin{align*}
		U^A\nabla_A\pi^{DC}=U^A\partial_A\pi^{DC}-U^A\pi^{EC}\partial_A  P^{\,\,D}_E-U^A\pi^{DE}\partial_A  P^{\,\,C}_E.
	\end{align*}
\end{lemma}
\begin{proof}
	
	Note that with the use of the orthogonal projection w.r.t.\ the embedding point $Z$, we can write the covariant derivative of a vector field $V\in   T_Z\ma$, using Definition \ref{def:conn}, as 
	\begin{align*}
		U^A\nabla_A V^B&=
		U^AP^{\,\,B}_C\partial_A \bar V^C\\
		&=  U^A\partial_A (P^{\,\,B}_C \bar V^C)- U^AV^C \partial_A  P^{\,\,B}_C \\
		&=  U^A\partial_A   \bar V^B- U^AV^C \partial_A  P^{\,\,B}_C .
	\end{align*}
	Applying this using standard methods (see for example \cite[Section 4.7]{boumaye} to the two-tensor $\pi$  gives us 
	\begin{align*}
		U^A\nabla_A\pi^{DC}=U^A\partial_A\bar \pi^{DC}-U^A\pi^{EC}\partial_A  P^{\,\,D}_E-U^A\pi^{DE}\partial_A  P^{\,\,C}_E,
	\end{align*}
	and using the notation in Equation \eqref{eq:not} we conclude the proof.
\end{proof}

\section{A Generalized Rieffel Deformation}
Using the embedding formalism,   we define the Rieffel product for  a   globally hyperbolic spacetime $(\mathcal{M},g)$ with a certain Poisson structure $\pi$,  as a smooth action of the  group $\gm$. See~\cite{R} for the case $\mathcal{G}_{\R^4}=\R^4$ and \cite{bahnwald, wald1, DLS} for    constructions that use the exponential map to define a deformation quantization. The smooth action $\alpha$ of $\gm$ acts as a retraction  in the embedding coordinates on functions extended to the embedding space. 

\subsection{Oscillatory Integrals}
In order to define the Rieffel product a few   definitions and results w.r.t.\ oscillatory integrals are  briefly given. First, we define an oscillatory integral as in Ref.~\cite[Ch.~7.8]{H}.
\begin{definition}
	\label{oipf}
	Let $X \subset \mathbb{R}^M$ be open and let $\Gamma$ be an open cone on $X \times \left( \mathbb{R}^N \setminus \{0\} \right)$ for some $N$, i.e., for all $(x,y) \in \Gamma$ also $(x,\lambda y) \in \Gamma$ for all $\lambda > 0$. Then an integral of the form (see Ref.~\cite[Eq.~7.8.1]{H})
	\begin{equation}
		\label{eq:iopf_integral}
		\int \mathe^{-\mathi    x \cdot y} \,b(x,y) \total^N y,
	\end{equation}
	is called an \textbf{oscillatory integral}.
\end{definition}
Next, we define a symbol, see Ref.~\cite[Def.~7.8.1]{H}.
\begin{definition}
	Let $X\subset \R^M$ be open and let $m$, $\rho$, $\delta$ be real numbers with $0 < \rho \leq 1$ and $0 \leq \delta < 1$. Then we denote by $S^{m}_{\rho,\delta}(X\times \mathbb{R}^N)$ the set of all $b\in C^\infty\left( X \times \mathbb{R}^N \right)$ such that for every compact set $K \subset X$ and all multi-idnices $\alpha$, $\beta$ there exists some constant $C_{\alpha,\beta,K}$ such that the estimate
	\begin{equation}
		\abs{ \partial^{\beta}_x \partial^{\alpha}_y b(x,y) } \leq C_{\alpha,\beta,K}(1+\abs{y})^{m - \rho \abs{\alpha} + \delta \abs{\beta}},
	\end{equation}
	is valid for all $(x,y) \in K \times \mathbb{R}^N$. The elements $S^{m}_{\rho,\delta}$ are called symbols of order $m$ and type $\rho,\delta$.
\end{definition} 
\begin{remark}\label{remwd}
	It is proven in Refs.~\cite[Thm.~7.8.2]{H}, (see also Refs.~\cite{LW}), that if $b \in S^{m}_{\rho,\delta}$ and $m < -N$ the oscillatory integral~\eqref{eq:iopf_integral} converges absolutely. In the case $m \geq -N$, the oscillatory integral has to be defined in a distributional manner. Using \cite[Proposition 1.1.11, Lemma 1.2.1, Proposition 1.2.2 ]{HOSC} one can prove the convergence of the oscillatory integral whenever $b(x,y)$ is a symbol even for the case $m \geq -N $ by using a mollifiying function. 	 
\end{remark} 
\begin{definition}
	We define the domain $\mathcal{D}^{m}_{\rho,\delta}(\mathcal{M})$ 
	\begin{align}
		\mathcal{D}^{m}_{\rho,\delta}(\mathcal{M}):=\{f,g\in C^{\infty}(\mathcal{M})|\,\alpha_{\Theta X}(f) \alpha_Y(g) \in S^{m}_{\rho,\delta}, \text{with } X,Y\in T\mathcal{M}\}.
	\end{align} 
	as the collection of a  pair of  two smooth functions $(f,g)$   that are symbols w.r.t.\ the  retractions.
	Analogously, we define the   domain $\mathcal{D}^{m}_{0,\rho,\delta}(\mathcal{M})$ 
	\begin{align}
		\mathcal{D}^{m}_{0,\rho,\delta}(\mathcal{M}):=\{f,g\in C^{\infty}_0(\mathcal{M})|\,\alpha_{\Theta X}(f) \alpha_Y(g) \in S^{m}_{\rho,\delta}, \text{with } X,Y\in T\mathcal{M}\}.
	\end{align} 
	as the collection of a  pair of two smooth functions with compact support $(f,g)$   that are symbols w.r.t.\ the  retractions.
\end{definition}
\subsection{The Deformed Product}
\begin{definition}
	\label{def:defpro}
	Let the smooth action $\alpha$ of the group $\gm$ denote second-order retractions w.r.t\ the manifold $(\mathcal{M},g)\subset (\mathbb{R}^N,\eta )$ (or $(\R^N,\delta )$) and let $\theta$ be a real, constant deformation parameter and $\pi\in \Gamma^{\infty}(\Lambda^2(T_Z\ma))$ a Poisson bivector that define the matrix $\Theta := \theta \,  \pi   $.  
	Then, the (\textit{formal}) generalized Rieffel product of two functions $(f,g)\in  \mathcal{D}^{m}_{\rho,\delta}(\mathcal{M})$ is defined as
	\begin{align*}
		\label{eq:defpro}
		\left( f \star_\theta g \right)(z) &\equiv  \lim_{\epsilon \to 0} \iint {\chi}(\epsilon X, \epsilon Y) \, \alpha_{\Theta   X}(f(Z))  \, \alpha_{Y}(g(Z))   \,\mathe^{- {\mathi} \, (  X,    Y )_z   } \total^N X \total^N Y \\
		&=  \lim_{\epsilon \to 0} \iint \chi(\epsilon X, \epsilon Y) \, f(R_{Z}( \Theta X))  \, g(R_{Z}( Y)) \,  \,\mathe^{- {\mathi} \,X \cdot\, Y  }  \total^N X \total^N Y\eqend{,}
	\end{align*}
	where $Z$ is the embedding point corresponding to   $z$,       $X,Y\in T_Z\ma$ and the integrations are w.r.t.\ the non-vanishing components (i.e.\ maximally $N$) and  the scalar product $  \cdot  $ is w.r.t.\ the flat Minkowski (or Euclidean) metric $\eta_{AB}$ ($\delta_{AB}$) at the point $Z$. Moreover,  the cut-off function $\chi \in   C_0^{\infty}(\R^N   \times  \R^N)$ is chosen such that   condition $\chi(0,0) = 1$ is fulfilled. 
\end{definition}

By identifying  $\alpha_{\Theta X}(f(Z)) \, \alpha_{Y}(g(Z))$ with the function $b(X,Y)$ in Definition \ref{oipf}, it is clear that the Rieffel product is given as an oscillatory integral.

The generalized Rieffel product satisfies the following   properties, where the first two are standard for a star product, see \cite[Defintion 6.1.1]{waldpoiss} and \cite{Kont}. 
\begin{proposition}
	For   functions $(f, g) \in   \mathcal{D}^{m}_{\rho,\delta}(\mathcal{M})$ the generalized Rieffel product given in Definition~\ref{def:defpro} is well-defined and satisfies the following properties \newline
	\begin{itemize}
		\item Unital, \ $$1\star_\theta f=f\star_\theta 1=f$$ ,\newline
		\item 
		The commutative limit,  $$\displaystyle\lim_{\theta \rightarrow 0} (f \star_\theta g) (z)= (f \cdot g) (z),$$  	\newline
		\item The flat limit, i.e.\ in case that the manifold is the four-dimensional  flat Minkowski (or Euclidean) space, with retractions provided by the translation group and  constant  Poisson bivector the generalized Rieffel product turns to the standard Rieffel product.
	\end{itemize}   
	\end{proposition} 
	\begin{proof}
		To see that the Rieffel product is well-defined, for functions $(f, g) \in   \mathcal{D}^{m}_{\rho,\delta}(\mathcal{M})$,  we refer the reader to  results mentioned in Remark \ref{remwd}.  Unitality is seen by a straight-forward calculation, 
		\begin{align*}
			\left( f \star_\theta 1 \right)(z) &\equiv  \lim_{\epsilon \to 0} \iint {\chi}(\epsilon X, \epsilon Y) \, \alpha_{\Theta   X}(f(Z))  \, \alpha_{Y}(1)   \,\mathe^{- {\mathi} \, (  X,    Y )_z   } \total^N X \total^N Y \\&=     \lim_{\epsilon \to 0} \iint {\chi}_1(\epsilon X) \, f(R_{Z}( \Theta X))  \, \delta^N (  X ) \total^N X  \\&=  f(Z),
		\end{align*}
		where we used  the property $R_Z(0)=Z$. The proof for $\left(1\star_\theta f \right)(z)$ is analogous.

		The    commutative limit $\lim_{\theta \rightarrow 0}$ is proven as follows. We distinguish between two cases. The first case where the (translated) functions $\alpha_{\Theta X}(f) \alpha_Y(g)$ are symbols with $m<-4$ is trivial since for such symbols no mollifier   is needed. In this case we interchange the limit $\lim_{\theta \rightarrow 0}$ with the integral by using  the dominated Lebesgue convergence theorem, since we know that the function inside the integral is bounded (by the very definition of a symbol) by a measurable function. The explicit term is then given by, 
		\begin{align*}
			\lim_{\theta \rightarrow 0}	\left( f \star_\theta g \right)(z)  &=
			\lim_{\theta \rightarrow 0} \iint  \, \alpha_{\Theta X}(f(Z)) \, \alpha_{Y}(g(Z)) \, \,\mathe^{- {\mathi}  X  \cdot\,Y   }  \total^N X \total^N Y
			\\&=f(Z)  \,   \iint  \alpha_{Y}(g(Z))\, \delta (  Y  ) \total^N Y \\&=f(z)\cdot  g(z),
		\end{align*}
		where we used the continuity of $f$ and the property $R_Z(0)=Z$ of the retraction. In the case where the function $\alpha_{\Theta X}(f) \alpha_Y(g)$ is a  symbol with $m\geq-4$ the deformed product is defined  in a distributional manner, which we denote by  
		$$T(\varphi_{\theta}):=\lim_{\epsilon \to 0}\iint \,e^{-iX\cdot Y} \,\varphi_{\theta}^{\epsilon}(X,Y)\,d^NX\,d^NY.$$
		By continuity of the distribution, it is clear that $T(\varphi_{\theta})$ converges to $T(\varphi_{0})$ in the limit $\theta\rightarrow0$, since $\varphi_{\theta}\rightarrow\varphi$ in $ C_0^{\infty}(\R^4   \times  \R^4)$.  The commutative limit is as well given in  \cite[Corollary 2.8]{R} for functions belonging to certain Fr\'echet spaces.\par
		We can (trivially)  embed $(\mathcal{M},g)=(\R^4,\eta)$  into a one dimension higher flat space $(\R^5,\eta)$ by the embedding function $F=Z^5$, where the tangent space is   $T_Z \R^{4} = \R^{4}$. Since the $5$-th component vanishes for both tangent vectors, we only integrate over four components. Taking the retractions to be translations we have the deformed integral  
		\begin{align*}
			\left( f \star_\theta g \right)(z) &\equiv  \lim_{\epsilon \to 0} \iint {\chi}(\epsilon X, \epsilon Y) \, \alpha_{\Theta  X}(f(Z))  \, \alpha_{Y}(g(Z))   \,\mathe^{- {\mathi} \,  X \cdot\,Y  } \total^4 X \total^4 Y\\&=\lim_{\epsilon \to 0} \iint {\chi}(\epsilon x, \epsilon y) \, f(z+\Theta  x)  \, g(z+y)   \,\mathe^{- {\mathi}   x \cdot\,y  } \total^4 x \total^4 y .
		\end{align*} 
	\end{proof} 
			\label{2.1}
			
		Further properties that are standard for a star product, namely associativity and Poisson-compatibility, are satisfied and are subject of the following proposition.
		
		\begin{proposition} \label{prop:associative} 
			The generalized  Rieffel product  is Poisson compatible,  
			\begin{align*}
				\frac{i}{2}( f\star_\theta g-  g\star_\theta f ) &=  \theta \,  \{f,g\}_{\pi},
			\end{align*} 
			and hence associative up to first order in $\theta$, i.e.,
			\begin{equation}
				\left( \left( f \star_\theta g \right) \star_\theta h \right)(z) = \left( f \star_\theta	\left( g \star_\theta h \right) \right)(z)+\mathcal{O}(\theta^2) .
			\end{equation}
			
		\end{proposition} 
		\begin{proof}
			We prove the proposition by using the Taylor expansion of the retractions  in order of the deformation parameter $\theta$, i.e.\ 
			\begin{align*}
				f(R_{Z}( \Theta X))=f(Z)+(\Theta X)^A\,\nabla_Af(Z)+\mathcal{O}(\theta^2),
			\end{align*}
			where we use the following notation
			\begin{align}
				\nabla_A f(Z_1)=  \pz (\partial_A \bar{f}(Z_1)) .
			\end{align}  
			Plugging this into the deformation formula we have up to first order in $\theta$, 
			\begin{align*}
				\left( f \star_\theta g \right)(z) &=
				\lim_{\epsilon \to 0} \iint \chi(\epsilon X, \epsilon Y) \, f(R_{Z}( \Theta X))  \, g(R_{Z}( Y)) \,  \,\mathe^{- {\mathi} \,X \cdot\, Y  }  \total^N X \total^N Y\eqend{,}\\ &=
				\lim_{\epsilon \to 0} \iint \chi(\epsilon X, \epsilon Y) \, (f(Z)+(\Theta X)^A\,\nabla_A f (Z)) \, g(R_{Z}( Y)) \,  \,\mathe^{- {\mathi} \,X \cdot\, Y  }  \total^N X \total^N Y\eqend{,}\\ &=f (z)  g(z) + \Theta^{AB}	 \,\nabla_A f(Z)  \lim_{\epsilon \to 0} \iint \chi(\epsilon X, \epsilon Y) \,X_B    \, g(R_{Z}( Y)) \,  \,\mathe^{- {\mathi} \,X \cdot\, Y  }  \total^N X \total^N Y\eqend{,}\\  &=f (z)  g(z) -i \Theta^{AB}	 \,\nabla_A f   \,\nabla_B g    ,
			\end{align*}
			where in the last lines we used the form of the retractions $g(R_{Z}( Y))=g(Z)+Y^A\,\nabla_A g(Z)+\mathcal{O}(Y^2)$. To prove associativity we first consider 
			\begin{align*}
				&   ((f\star_{\theta}g) \star_{\theta}h )(z) = (F \star_{\theta}h)(z)  =
				F (z)  h(z) -i \Theta^{AB}	 \,\nabla_A F   \,\nabla_B h   
				\\&\,\\&=(f (z)  g(z) -i \Theta^{AB}	 \,\nabla_A f   \,\nabla_B g     )h(z)  -i \Theta^{AB}	 \, (\nabla_A f (z)  g(z)+f (z)  \nabla_A g(z) )   \,\nabla_B h  .
			\end{align*} 
			Next we consider, 
			\begin{align*} &  (f\star_{\theta}(g \star_{\theta}h ))(z)  = (f\star_{\theta}H)(z)   = f (z) H(z) -i \Theta^{AB}	 \,\nabla_A f   \,\nabla_B H        \\&\,\\&=f (z) (g (z)  h(z) -i \Theta^{AB}	 \,\nabla_A g  \,\nabla_B h)   
				-i \Theta^{AB}	 \,\nabla_A f   \, (\nabla_B g (z)  h(z)+g (z)  \nabla_B h(z)   )  .
			\end{align*} Using this expression we have for the commutator 
			\begin{align*}
				\frac{i}{2}( f\star_\theta g-  g\star_\theta f ) &= \Theta^{AB}	 \,\nabla_A f   \,\nabla_B g \\&= \theta\,\pi^{AB}	 \,\nabla_A f   \,\nabla_B g \\&= \theta\,\pi^{AB}	 \,\partial_A f   \,\partial_B g,
			\end{align*}
			where in the last line we used Lemma \ref{lem:thetproj}.
		\end{proof}  
		
		Next, we turn to the issue of associativity of the generalized product.  
		\begin{theorem}\label{thm:defprodorder}
			Let the Poisson bivector  $\pi\in \Gamma^{\infty}(\Lambda^2(T_Z\ma))$,  have a vanishing covariant derivative, 
			\begin{align*}
				U^A\nabla_A\pi^{CD}(Z)=0,
			\end{align*}
			for all $U\in T_Z\ma$ and $Z\in\ma$.
			Then, the generalized  Rieffel product  is associative up to second order in $\theta$, and it is explicitly given by 
			\begin{align*}
				&\left( f \star_\theta g \right)(z)   =f    g -i \Theta^{AB}	 \,\partial_A f  \,\partial_B g  - \frac{1}{2} 
				\Theta^{AC}\Theta^{BD}
				\partial_A\nabla_B f \, \partial_C\nabla_D g ,
			\end{align*} 
			for  functions $(f,g)\in  \mathcal{D}^{m}_{\rho,\delta}(\mathcal{M})$.
		\end{theorem}
		
		\begin{proof}
			See Appendix \ref{appthm:defprodorder}.
		\end{proof}
		A few remarks regarding the vanishing covariant derivative of the Poisson tensor, in order to assure associativity, are in order. A triple $(\mathcal{M},\pi,g)$ with the additional condition $\nabla \pi=0$, is called a (Pseudo-)Riemannian Poisson manifold, see \cite[Defintion 1.1.]{boum} and references therein. For a  definition of Poisson-Riemannian manifold, where the covariant constant  condition on the Poisson tensor is called Poisson compatibility, see \cite[Equation 3.1]{majbeg}.

		In  case, where the Poisson tensor has maximal rank $2n$ equal to the dimension of the manifold, the inverse of $\pi^{AB}$ is given by the matrix $\omega_{AB}$, a symplectic structure, which defines a symplectic manifold. In this case, Fedosov, \cite{fedo1} has proven the existence  of a deformation quantization if there is a  symplectic connection, which  is a torsion-free connection
		that preserves the symplectic tensor, defined by condition, \cite[Definition 2.3]{fedo1}.
		\begin{align*}
			\nabla_{C}\omega_{AB}=0.
		\end{align*}
		A collection of a symplectic manifold and a symplectic connection is referred to as a Fedosov manifold, see \cite{fedo2}. We proved associativity of the generalized deformed product, up to second order in the deformation parameter,  if the covariant derivative of the Poisson structure vanishes. In the symplectic case, the vanishing Poisson structure induces a symplectic connection, which can be seen from the following identity,
		\begin{align}\label{eqfedo}
			\nabla(\pi \omega )= 0,
		\end{align}
		the vanishing covariant derivative of the Poisson tensor and the Leibniz rule.

		\begin{example}
			Let for the  two-sphere $\mathbb{S}^2$  the Poisson bivector at the embedding point $Z$  and the  orthogonal projection be assigned by, 
			\begin{align}
				\pi^{AB}(Z)=\varepsilon^{ABC}Z_C,\qquad \qquad P_A^{\,\,C}(Z)=\delta_A^{\,\,C}-Z_A\,Z^{C}.
			\end{align} 
			Then, $ U^A\nabla_A\pi^{AB}=0$.  
		\end{example}
		\begin{proposition}\label{prop:ncspacetime}
			The commutation relations between the embedding coordinates are, up  to second order in the deformation parameter $\theta$,  explicitly presented by
			\begin{align*}
				[X^A,X^B]_{\theta}:=   X^A\star_\theta X^B-X^B\star_\theta X^A=-2i\,\Theta^{AB}.
			\end{align*}
		\end{proposition}
		\begin{proof}
			Using Theorem \ref{thm:defprodorder} we have for the deformed product of the coordinates $X$
			\begin{align*}
				X^A\star_\theta X^B&=      X^A    X^B -i \Theta^{CD}	 \,\partial_C X^A  \,\partial_D X^B - \frac{1}{2} 
				\Theta^{EC}\Theta^{FD} 
				\partial_E\nabla_F X^A \, \partial_C\nabla_D X^B
				\\&=      X^A    X^B -i \Theta^{CD}	 \,\delta_C^{\,\,A}\,\delta_D^{\,\,B} - \frac{1}{2} 
				\Theta^{EC}\Theta^{FD} 
				(  \partial_E P_F^{\,\,R}  \,\delta_R^{\,\,A}) \,   \partial_C  P_D^{\,\,S}  \,\delta_S^{\,\,B} \\&=      X^A    X^B -i \Theta^{AB}	   - \frac{1}{2} 
				\Theta^{EC}\Theta^{FD} \, \,
				(  \partial_E P_F^{\,\,A}  ) \,   \partial_C  P_D^{\,\,B}  .
			\end{align*}
			Next, we consider the deformed commutator 
			\begin{align*} [X^A,X^B]_{\theta} =-2i \Theta^{AB}-\frac{1}{2} \Theta^{EC}\Theta^{FD} \, \,
				\left( 
				(  \partial_E P_F^{\,\,A}  ) \,   \partial_C  P_D^{\,\,B}- (  \partial_E P_F^{\,\,B}  ) \,   \partial_C  P_D^{\,\,A}  \right),
			\end{align*}
			by interchanging the indices $E$ and  $C$ and as well $F$ and  $D$, the second order term cancels.
		\end{proof} 
		
		\begin{remark}
			Following the definition of the generalized deformed product and taking into account the the well-know
			fact that $\{X^A,X^B\}=\pi^{AB}$, the noncommutativity of the coordinates, given by
			\begin{align*}
				[X^A,X^B]_{\theta}=-2i\theta \pi^{AB},
			\end{align*}
			emerges in a  canonical fashion. For example in the flat case, the canonical Poisson bivector is presented by the canonical, constant, skew-symmetric tensor and the noncommutativity has the Moyal-Weyl structure,
			\begin{align*}
				[x^{\mu},x^{\nu}]_{\theta}=-2i\theta \pi^{\mu\nu},
			\end{align*}
			where the emebedding is determined by the trivial one. In the case of the two-sphere the canonical induced Poisson structure gives us the well-studied fuzzy sphere, i.e.\
			\begin{align*}
				[X^A,X^B]_{\theta}=-2i\theta \,\varepsilon^{AB}_{\,\,\,\,\,\,\,\,C}X^C.
			\end{align*}
		\end{remark}

		\subsection{Deformed Product for Two Different Points}
		In the previous section we defined a generalization of the Rieffel product to the case of non-flat manifolds and established their associativity under a certain assumption on the Poisson bivector. Next, we broaden this framework to encompass the deformed product of functions defined at two distinct points. The main motivation  for this extension is the application of those methods to quantum field theory and in particular to the two-point function. 
		\begin{definition}
			\label{def:defpro1}
			Let the smooth action $\alpha$ of the group $\gm$ denote second-order retractions w.r.t.\ the manifold $(\mathcal{M},g)\subset (\mathbb{R}^N,\eta )$ (or $(\R^N,\delta )$) and let $\theta$ be a real constant deformation parameter and $\pi\in \Gamma^{\infty}(\Lambda^2(T_{Z_1}\ma))$ a Poisson bivector that define the matrix $\Theta := \theta \,  \pi   $.  
			Then, the (\textit{formal}) generalized Rieffel product of two functions $(f,g)\in  \mathcal{D}^{m}_{\rho,\delta}(\mathcal{M})$ at two different points is defined as\begin{align}\nonumber
				\label{eq:defpro}
				f (z_1) \star_\theta g (z_2)  &\equiv  \lim_{\epsilon \to 0} \iint {\chi}(\epsilon X, \epsilon Y) \, \alpha_{\Theta \,  \delta({Z_1,Z_2})  X}(f(Z_1))  \, \alpha_{Y}(g(Z_2))  \mathe^{- {\mathi} \, (  X,    Y )_{z_2}   }   \total^N X \total^N Y \\
				&=  \lim_{\epsilon \to 0} \iint \chi(\epsilon X, \epsilon Y) \, f(R_{Z_1}( \Theta\,   \delta({Z_1,Z_2}) X))  \, g(R_{Z_2}( Y)) \,  \,\mathe^{- {\mathi} \,X \cdot\, Y  }  \total^N X \total^N Y\eqend{,}
			\end{align}
			where $Z_1$ and $Z_2$ are the embedding point corresponding to    $z_1$ and $z_2$,       $X, Y\in T_{Z_2}\ma$  and the integration is w.r.t.\ the non-vanishing components (i.e.\ maximally $N$) and  the scalar product $  \cdot  $ is w.r.t.\ the flat Minkowski (or Euclidean) metric $\eta_{AB}$ ($\delta_{AB}$). Moreover, $\delta^{\,\,\,A'}_{B}({Z_1,Z_2})$ is the operator of geodesic transport from $T_{Z_2}\ma$ to $T_{Z_1}\ma$,\footnote{We use here the same notation as in \cite{Mor03}. This bitensor is as well notated by $g^{\mu'}_{\rho}$, see for example \cite[Chapter III]{THCb}. }   the cut-off function $\chi \in   C_0^{\infty}(\R^N   \times  \R^N)$ is chosen such that condition $\chi(0,0) = 1$ is fulfilled. 
		\end{definition}
		Note that when both points are equal, i.e.\ $z_1=z_2$, the product assumes the form previously defined for a single point. The following propositions deal with the uniqueness of this definition. In particular, we prove that various sensible definitions all agree with the product given in the former Definition \ref{def:defpro1}.
		
		\begin{proposition} 
			Let $\theta$ be a real, constant deformation parameter and $\pi\in \Gamma^{\infty}(\Lambda^2(T_{Z_1}\ma))$ a Poisson bivector that define the matrix $\Theta := \theta \,  \pi$. Then, the definition of the deformed product for two different points given in Definition \ref{def:defpro1} is equivalent to the following  product,\begin{align*} 			 f (z_1) \star_\theta g (z_2)  =   \lim_{\epsilon \to 0} \iint \chi(\epsilon U, \epsilon V) \, f(R_{Z_1}( \Theta\,    U))  \, g(R_{Z_2}(\delta({Z_2,Z_1}) V)) \,  \,\mathe^{- {\mathi} \,U \cdot\, V  }  \total^N U \total^N V\eqend{,}		\end{align*}where $Z_1$ and $Z_2$ are the embedding point corresponding to    $z_1$ and $z_2$,       $U, V\in T_{Z_1}\ma$  and the integration is w.r.t.\ the non-vanishing components, where $\delta^{\,\,\,A'}_{B}({Z_2,Z_1})$ is the operator of the geodesic transport from $T_{Z_1}\ma$ to $T_{Z_2}\ma$.
		\end{proposition}
		\begin{proof} 
			To see the equivalence to Definition \ref{def:defpro1} we perform the following variable substitutions, 
			\begin{align*}
				\delta({Z_2,Z_1}) V=Y, \qquad  U=\delta({Z_1,Z_2})X,
			\end{align*}
			where $X,Y\in T_{Z_2}\ma$. Using the relation of the determinant of the parallel propagator, namely (\cite[Equation (7.7)]{poiss}),
			\begin{align*}
				\det  \delta^{\,\,\,A'}_{B}= \frac{\det(-\eta)}{\det(-\eta')}=1,
			\end{align*}
			and using the inverses of the parallel-transports (see \cite[Equation 5.7]{poiss}) the scalar product 
			\begin{align*}
				( U,    V )_{z_1}& =  \delta^{\,\,\,B'}_{A}(Z_1,Z_2)
				X_{B'}\delta^{\,\,\,A}_{C'}(Z_1,Z_2)
				Y^{C'}    \\&=
				\delta^{A}_{\,\,\,C'}(Z_2,Z_1)  \delta^{\,\,\,B'}_{A}(Z_1,Z_2)
				X_{B'}
				Y^{C'}\\&=
				(  X,    Y )_{z_2} ,
			\end{align*}
			where in the last lines we used relation $\delta^{\,\,\,A}_{C'}(Z_1,Z_2)=\delta^{A}_{\,\,\,C'}(Z_2,Z_1)$ (see \cite[Equation (5.8)]{poiss}) and the inverse relation. 
			
		\end{proof}  
		\begin{proposition} 
			Let $\theta$ be a real, constant deformation parameter and $\pi\in \Gamma^{\infty}(\Lambda^2(T_{Z_2}\ma))$ a Poisson bivector that define the matrix $\Theta^{A'B'} (Z_2) = \theta \,  \pi^{A'B'}=\delta^{\,\,\,A'}_{A}(Z_2,Z_1)\delta^{\,\,\,B'}_{B}(Z_2,Z_1)\Theta^{AB} (Z_1)$. Then, the definition of the deformed product for two different points given in Definition \ref{def:defpro1} is equivalent to the following  product,\begin{align*} 			 f (z_1) \star_\theta g (z_2)  &=   \lim_{\epsilon \to 0} \iint \chi(\epsilon X, \epsilon Y) \, f(R_{Z_1}(\delta({Z_1,Z_2}) ( \Theta(Z_2)\,   X))  \, g(R_{Z_2}(  Y))  
				\,\mathe^{- {\mathi} \,X \cdot\, Y  }  \total^N X \total^N Y\eqend{,}	\\&=
				\lim_{\epsilon \to 0} \iint \chi(\epsilon X, \epsilon Y) \, f(R_{Z_1}(\delta({Z_1,Z_2}) \Theta(Z_2)\,  \delta({Z_2,Z_1}) U))  \, g(R_{Z_2}( \delta({Z_2,Z_1})  V)) \,\\&  \qquad\qquad\qquad\qquad\qquad\qquad\qquad\qquad\qquad\qquad\qquad\qquad\times
				\,\mathe^{- {\mathi} \,U \cdot\, V  }  \total^N U \total^N V\eqend{,}	\end{align*}where $Z_1$ and $Z_2$ are the embedding point corresponding to    $z_1$ and $z_2$,       $X, Y\in T_{Z_2}\ma$, $U, V\in T_{Z_1}\ma$, and the integration is w.r.t.\ the non-vanishing components.
		\end{proposition}
		\begin{proof}
			The first equality is easily proven, since 
			\begin{align*}
				\delta({Z_1,Z_2})\delta({Z_1,Z_2}) \Theta(Z_2)=\Theta(Z_1),
			\end{align*}
			and the second equality is proven analogously to the former Proposition. 
		\end{proof}

		\begin{lemma}\label{lem:thetproj'}
			For a smooth bivector field $\pi\in \Gamma^{\infty}(\Lambda^2(T_{Z_1}\ma))$ and  the operator of the geodesic transport  $\delta({Z_1,Z_2})$   from $T_{Z_2}\ma$ to $T_{Z_1}\ma$,  we define  the following object
			\begin{align*}
				\pi^{AC'}(Z_1,Z_2)&:=\pi^{AB}(Z_1)\,	\delta^{\,\,\,C'}_{B}({Z_1,Z_2}).
			\end{align*} 
			Then, for the tensor  $\Theta(Z_1,Z_2)$    and the orthogonal projector $P(Z_2)$ the following relation  holds, 
			\begin{equation*}
				\pi^{AA'}(Z_1,Z_2)=\pi^{AD'}(Z_1,Z_2)P^{A'}_{\,\,D'}(Z_2).
			\end{equation*} 
		\end{lemma}
		\begin{proof}
			The proof is analogous to the proof of Lemma \ref{lem:thetproj}.
		\end{proof}
		
		\begin{proposition}\label{prop:defpro2}
			The deformed product for two different points, according to Definition \ref{def:defpro1}, is given up to second order in the deformation parameter as follows,
			\begin{align*}
				f (z_1)\star_\theta g (z_2)  &=   f (z_1)  g(z_2)-i    \Theta^{AB'}(Z_1,Z_2) \,\partial_A f(Z_1)   \,\partial_{B'} g(Z_2)   
				\\  &-\Theta^{ABC'D'}    \partial_A\,\nabla_B f(Z_1)\,   \partial_{C'}\nabla_{D'}g(Z_2)    +\mathcal{O}(\Theta^3),
			\end{align*}
			where we defined 
			\begin{align*}
				\Theta^{AB'}(Z_1,Z_2)&:=\Theta^{AB}(Z_1)\,	\delta^{\,\,\,B'}_{B}({Z_1,Z_2}),
				\\ \Theta^{ABC'D'} (Z_1,Z_2)& :=        \frac{1}{4}  \left( \Theta^{AC}\,  \Theta^{BD}+ \Theta^{AD}\, \Theta^{BC}  \right)
				\delta_D^{\,\,\,D'}(Z_1,Z_2)\,\delta_C^{\,\,\,C'}(Z_1,Z_2)  
				,
			\end{align*}
			for the functions $(f,g)\in  \mathcal{D}^{m}_{\rho,\delta}(\mathcal{M})$.
		\end{proposition}
		\begin{proof}
			See Appendix \ref{approp:defpro2}.
		\end{proof}

		Although the deformed product is well-defined for a single point, its extension to two points requires an additional tool of geodesic parallel transport. In the context of quantum field theory   and associated expectation values w.r.t.\ the two-point function, the inversion is the case. Specifically, the two-point function is singular for a single point, prompting the requirement for regularization.  After regularization, one takes the coinciding point limit via the  geodesic parallel transport.
		
		\section{Quantum Field Theory in Globally Hyperbolic Spacetimes}
		\label{sec3}

		\subsection{The Quantized Scalar Field }\label{se:QF}
		In this section we give a quick overview and fix the notation  of  the free quantized scalar field $\phi$ in a globally hyperbolic spacetime $(\mathcal{M},g)$.  For a more complete discussion, see \cite{WQ}. Due to the global hyperbolicity of $(\mathcal{M},g)$ there exists unique  advanced and retarded fundamental solutions $G^{\rm adv/ret}$ (``Green's operators'') 
		for the  Klein-Gordon operator $\square_g  - m^2$ defined on smooth scalar test-functions on $\mathcal{M}$, where $\square_g$ is the Laplace-Beltrami operator w.r.t. the manifold $(\mathcal{M},g)$.   Next, we define a $*$-algebra $\mathscr{A} = \mathscr{A}(\mathcal{M},g)$ generated by the elementary algebraic objects, namely the field operators $\phi(F)$, $F \in C_0^\infty (\mathcal{M}) $, and a unit element ${\bf 1}$. It is called the CCR algebra of the quantum field  $\phi(F)$ over the manifold $\mathcal{M}$ and it  fulfills (see \cite[Definition 8]{MK})
		\begin{align*}
			(i)&  \ \ F \mapsto \phi(F) \ \ \text{is}\ \mathbb{R}\text{-linear} \quad \quad (ii) \ \ \phi(( \nabla^a\nabla_a - m^2)F)= 0 \\
			(iii)& \ \ \phi(F)^* = \phi(F)  \quad \quad \quad \quad \quad (iv) \ \ [\phi(F),\phi(H)] = iE(F,H) \cdot {\bf 1}\,, \quad \quad F,H \in C_0^\infty(\mathcal{M}),\, 
		\end{align*}
		where $E(F,F')$ is the causal Green's function (or causal propagator)  defined by the following expression
		$$ E(F,H) = \int_{M} \left( F(x) (G^{\rm adv}H)(x) - F(x)(G^{\rm ret}H)(x) \right)\, d{\rm vol}_g(x) \,, \quad F,H \in C_0^\infty(\mathcal{M})\,,$$
		where $d{\rm vol}_g$ is the volume form w.r.t.\ the metric $g$.
		\par 
		An important consequence of the global hyperbolicity of the manifold $(\mathcal{M},g)$ is given by the causality condition  (see \cite[Proposition 4]{MK}).
		\begin{proposition}
			Referring to $\mathscr{A}(\mathcal{M},g)$, $\phi(F)$ and $\phi(H)$ commute if the supports of $F$ and $H$ are causally separated, i.e.\ $$ E(F,H) =0,$$  if the support of $F$ does not intersect $J^+_M(supp H)\cup J^-_M(supp H)$.
		\end{proposition}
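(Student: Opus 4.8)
The plan is to reduce everything to the support properties of the advanced and retarded Green's operators. Recall that on the globally hyperbolic spacetime $(M,g)$ the operators $G^{\rm adv/ret}$ are the \emph{unique} right inverses of $\square_g - m^2$ (extended to act on $C_0^\infty(M)$) satisfying $\supp(G^{\rm ret} H) \subseteq J^+_M(\supp H)$ and $\supp(G^{\rm adv} H) \subseteq J^-_M(\supp H)$; this is part of the standard well-posedness theory of the Cauchy problem for normally hyperbolic operators, see e.g.\ \cite{WQ}. The same theory gives that $G^{\rm adv/ret} H$ is a smooth function, so that its pairing with $F \in C_0^\infty(M)$ against the volume form $d{\rm vol}_g$ appearing in the definition of $E$ is meaningful. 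The first step is simply to invoke these two facts.

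The second step is elementary support bookkeeping. By hypothesis $\supp F \cap \bigl(J^+_M(\supp H) \cup J^-_M(\supp H)\bigr) = \emptyset$, hence $\supp F$ is disjoint from both $\supp(G^{\rm ret} H)$ and $\supp(G^{\rm adv} H)$. Therefore each of the integrands $F \cdot (G^{\rm adv}H)$ and $F \cdot (G^{\rm ret}H)$ vanishes identically on $M$, and the defining formula for the causal propagator gives $E(F,H) = 0$.

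The third step is to plug this into the CCR relation $(iv)$ of $\mathscr{A}(M,g)$, which yields $[\phi(F),\phi(H)] = \mathi\, E(F,H) \cdot {\bf 1} = 0$. Since $E$ is antisymmetric, $E(F,H)=0$ is equivalent to $E(H,F)=0$, so the statement is symmetric under exchange of $F$ and $H$: it makes no difference which of the two test functions is assumed to have support outside the causal shadow of the other.

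I do not expect a serious obstacle here; the only non-trivial ingredient is the support (and smoothness) property of $G^{\rm adv/ret}$, and this is precisely where global hyperbolicity enters — without a Cauchy surface one loses both the uniqueness of the fundamental solutions and the causal localisation of their supports, and the proposition would fail. Everything else is a one-line computation.
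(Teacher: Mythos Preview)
Your argument is correct and is the standard one: the support conditions $\supp(G^{\rm ret}H)\subseteq J^+_M(\supp H)$ and $\supp(G^{\rm adv}H)\subseteq J^-_M(\supp H)$ force both integrands in the definition of $E(F,H)$ to vanish identically under the stated hypothesis, and the CCR relation $(iv)$ then gives $[\phi(F),\phi(H)]=0$.

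There is nothing to compare against, however: the paper does not supply its own proof of this proposition. It is stated as a quotation of \cite[Proposition~4]{MK} and left unproven, serving only as background for the later deformation constructions. Your write-up therefore goes beyond what the paper does here; the content is fine, and the remark that global hyperbolicity is the essential input (via existence, uniqueness and causal support of $G^{\rm adv/ret}$) is exactly the point the paper is alluding to when it calls this ``an important consequence of the global hyperbolicity''.
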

		\subsection{States and GNS Theorem}\label{sexgns}
		Up to this point, the field operators and their generating algebra are abstract algebraic objects. To represent those algebraic objects on a  Hilbert space we use the GNS  (Gelfand-Naimark-Segal) construction. This theorem supplies us with such a representation in the following manner, see for example \cite[Chapter 1, Theorem 1]{MK} and references therein. 
		For a given state $\omega$  over the (unital) $*$-algebra $\mathscr{A}$, i.e.\ a positive $\mathbb{C}$-linear map $\omega:\mathscr{A}\rightarrow\mathbb{C}$ which is \textit{positive} ($\omega(a^*a)\geq 0$, for all $a\in\mathscr{A}$) and \textit{normalized}, one obtains a
		quadruple $(\mathcal{H}_{\omega},D_{\omega},\pi_{\omega},\Psi_{\omega})$. This quadruple consists of a complex Hilbert space $\mathcal{H}_{\omega}$, a dense subspace $\mathcal{D}_{\omega}\subset \mathcal{H}_{\omega}$, a  $*$-representation  $\pi_{\omega}:\mathscr{A}(\mathcal{M},g)\rightarrow\mathscr{L}(\mathcal{D}_{\omega})$  of $\mathscr{A}$ on $\mathcal{H}_{\omega}$ with domain $\mathcal{D}_{\omega}$ and a cyclic  and separating vector $\Psi_{\omega}$.   The field operators are   given by densely defined symmetric operators, 
		\begin{equation*}
			\phi_{\omega}(F)=\pi_{\omega}(\phi(f)):D_{\omega}\rightarrow \mathcal{H}_{\omega}.
		\end{equation*} 
		The expectation value of $n$ functions or the $n$-point function of elements of the algebra $\mathscr{A}$ is determined by
		\begin{align*}
			\omega_n( F_1, \cdots, F_n )=\langle \Psi_{\omega}|\,
			\pi_{\omega}(\phi(F_1))\cdots\pi_{\omega}(\phi(F_n)) \,\Psi_{\omega} \rangle.
		\end{align*} 
		Next, we list some basic properties of the two point function in the following proposition (\cite[Proposition 7]{MK}).
		\begin{proposition}
			Consider a state $\omega:\mathscr{A}(\mathcal{M})\rightarrow\C$ and define $P:=\square_{M}+m^2+\xi R$ (for arbitrarily fixed values of $m^2$, $ \xi\in\R$). The two-point function, $\omega_2$ satisfies the following relations  
			\begin{align}
				& \omega_2(F,F) \ge 0\,, \quad \omega_2(F',F) = \bar{\omega_2(F,F')}\,, \quad {\rm Im}\,\omega_2(F,F') = {\frac{1}{2}}E(F,F')\,, \\
				& \omega_2(PF,F') = 0 = \omega_2(F,PF')\,, \quad    \omega_2(F,F')- \omega_2(F',F)=iE(F,F'),
			\end{align}  
			for $F,F' \in C_0^\infty(\mathcal{M})$.
		\end{proposition}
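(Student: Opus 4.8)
The statement to be proved is the standard list of properties of the two-point function of a state on the CCR algebra $\mathscr{A}(M,g)$: positivity $\omega_2(F,F)\geq 0$, Hermiticity $\omega_2(F',F)=\overline{\omega_2(F,F')}$, the imaginary-part identity $\operatorname{Im}\omega_2(F,F')=\tfrac12 E(F,F')$, the field equation in both slots $\omega_2(PF,F')=0=\omega_2(F,PF')$, and the commutator relation $\omega_2(F,F')-\omega_2(F',F)=iE(F,F')$. The plan is to derive every one of these directly from the defining axioms $(i)$--$(iv)$ of $\mathscr{A}(M,g)$ together with the two defining properties of a state (linearity plus positivity, and normalization $\omega(\mathbf{1})=1$), using the GNS data $(\mathcal{H}_\omega,D_\omega,\pi_\omega,\Psi_\omega)$ only as a convenient bookkeeping device.

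\emph{Positivity and Hermiticity.} First I would write $\omega_2(F,F)=\langle\Psi_\omega,\pi_\omega(\phi(F))\pi_\omega(\phi(F))\Psi_\omega\rangle=\langle\phi_\omega(F)^*\Psi_\omega,\phi_\omega(F)\Psi_\omega\rangle$; since $\phi(F)^*=\phi(F)$ by axiom $(iii)$ the field operator is symmetric on $D_\omega$, so this equals $\|\phi_\omega(F)\Psi_\omega\|^2\geq 0$. For Hermiticity, again using $(iii)$, $\overline{\omega_2(F,F')}=\overline{\langle\Psi_\omega,\phi_\omega(F)\phi_\omega(F')\Psi_\omega\rangle}=\langle\phi_\omega(F')\phi_\omega(F)\Psi_\omega,\Psi_\omega\rangle^{-}$... more cleanly: $\overline{\omega(a)}=\omega(a^*)$ for any state, hence $\overline{\omega_2(F,F')}=\omega\big((\phi(F)\phi(F'))^*\big)=\omega(\phi(F')^*\phi(F)^*)=\omega(\phi(F')\phi(F))=\omega_2(F',F)$.

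\emph{Commutator relation and imaginary part.} The identity $\omega_2(F,F')-\omega_2(F',F)=\omega\big([\phi(F),\phi(F')]\big)=\omega\big(iE(F,F')\mathbf{1}\big)=iE(F,F')$ is immediate from axiom $(iv)$, linearity of $\omega$, and normalization; note $E(F,F')\in\mathbb{R}$ since $G^{\rm adv},G^{\rm ret}$ are real. Combining this with Hermiticity: $2i\operatorname{Im}\omega_2(F,F')=\omega_2(F,F')-\overline{\omega_2(F,F')}=\omega_2(F,F')-\omega_2(F',F)=iE(F,F')$, which gives $\operatorname{Im}\omega_2(F,F')=\tfrac12 E(F,F')$.

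\emph{Field equation.} From axiom $(ii)$, $\phi\big((\square_g-m^2-\xi R)F\big)=0$ in $\mathscr{A}$ — here one should note that the axiom as stated uses $\nabla^a\nabla_a-m^2$, and the inclusion of the curvature term $\xi R$ is the conventional generalization covered by the same argument, so $PF=(\square_M+m^2+\xi R)F$ should really read with the sign convention matching $(ii)$; I would simply invoke that $\phi(PF)=0$ as an element of $\mathscr{A}$. Then $\omega_2(PF,F')=\omega(\phi(PF)\phi(F'))=\omega(0)=0$, and likewise $\omega_2(F,PF')=\omega(\phi(F)\phi(PF'))=0$. The only genuinely delicate point is that $\phi_\omega(F)$ is unbounded, so one must work on the common invariant dense domain $D_\omega$ and use that $\Psi_\omega\in D_\omega$ and $\pi_\omega(\mathscr{A})D_\omega\subset D_\omega$ to justify composing operators and reading off expectation values; this is precisely what the GNS construction guarantees, so I expect no real obstacle — the main ``hard part'' is merely being careful that all manipulations take place at the algebraic level before passing to the representation, so that no domain question ever arises.
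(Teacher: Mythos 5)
Your proof is correct and is exactly the standard argument: the paper does not prove this proposition itself but imports it verbatim as \cite[Proposition 7]{MK}, where it is established by precisely the steps you give (positivity of the state applied to $\phi(F)^*\phi(F)$, the identity $\overline{\omega(a)}=\omega(a^*)$, the CCR axiom $(iv)$ for the commutator and hence the imaginary part, and axiom $(ii)$ for the field equation). Your remarks on the reality of $E$ and on the sign/curvature-term convention in $P$ versus axiom $(ii)$ are appropriate and do not affect the argument.
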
 
		Furthermore, we define the Hadamard parametrix, where for more details see \cite{K88},\cite[Section 3.2]{MK}.  Let $\sigma(x,x')$ be the half of the squared geodesic distance (the Synge function) between $x$ and $x'$ and let in a convex neighborhood $C$ of a  four dimensional spacetime the   parametrix   
		$H_\epsilon$ (that is a limit of the Hadamard parametrix) be given as,
		\begin{equation} H_{\epsilon}(x,y)=\frac{u(x,y)}{(2\pi)^2\sigma_\epsilon(x,y)}+ v(x,y)\log\left(\frac{\sigma_\epsilon(x,y)}{\lambda^2}\right),\label{Z}\end{equation}
		where $u,v$ are $C^\infty$ functions on $C\times C$, with $u(x,x)=1$ and 
		where
		$x,y \in C$,  $T$ is any local time coordinate increasing towards the future,
		$\lambda>0$ a  length scale
		and 
		\begin{eqnarray} \sigma_\epsilon(x,y)   \stackrel {\mbox{\scriptsize  def}} {=} \sigma(x,y) +2i\epsilon (T(x)-T(y)) + \epsilon^2\label{sigma}\:,\end{eqnarray}
		finally, the cut in the complex domain of the $\log$ function is assumed along the negative axis. 
		
		\begin{definition}
			\label{def_HadamardFormScalar}
			A (not necessarily quasifree) state $\omega$ on
			$\mathscr{A}(\mathcal{M})$ and its two point function  $\omega_2$ are {\bf  Hadamard} if $\omega_2 \in {\mathcal{D}}'(M \times M)$ 
			and  every point of $\mathcal{M}$ admits an open normal  neighborhood $C$ where 
			\begin{equation}\omega_2(x,y) - H_{0^+}(x,y)=  w(x,y)  \quad \mbox{for some }\:\:  w\in C^\infty(C\times C)\:. \label{hadamard}\end{equation}
			Here $0^+$ indicates  the standard weak distributional limit as $\epsilon \to 0^+$.
		\end{definition}  
		
		\subsection{The Microlocal Spectrum Condition and Hadamard}
		
		This section is a recollection of definitions and theorems about microlocal analysis, see \cite[Chapter 6]{CF} (and references therein) for more details. The principle idea of microlocal analysis is to use the decay properties of the Fourier-transformation, to gain information about the underlying singular structure.  This has to do with the fundamental connection between smooth functions and their rapidly decaying Fourier transformations. In the context of microlocal analysis, the wavefront set plays a fundamental role and we define it next, \cite[Defintion 6.1]{CF}.
		\begin{definition}
			A) If $u\in\mathcal{D'}(\mathbb{R}^n)$, a pair $(x,k)\in\R^n\times(\R^n\backslash\{0\})$ is a regular direction for $u$ if there exist \newline
			\begin{enumerate}
				\item $\phi\in C_0^{\infty}(\R^n)$ with $\phi(x)\neq0$ \\
				\item a conic neighborhood $V$ of $k$\\
				\item constants $C_N$, $N\in\N$
				so that 
				\begin{align*}
					\bigg| \hat{\phi u}(k) \bigg|<\frac{C_N}{1+|k|^N},\qquad \forall k\in V 
				\end{align*}
				i.e.\ $\hat{\phi u}$ decays rapidly as $k\rightarrow\infty$ in $V$.\\
			\end{enumerate}
			B) The \textit{wavefront} set of $u$ is defined to be \newline
			$$WF(u)=\{(x,k)\in\R^n\times(\R^n\backslash\{0\}): (x,k)  \text{\,is not \textit{a regular direction for}  u} \}.$$
		\end{definition}
		
		Some useful properties of the wavefront set are given in the following. \newline
		\begin{enumerate}
			\item \label{propwf1} If $f\in C^{\infty}$ it has an empty wavefront set \ $WF(f)=\emptyset.$ \newline
			\item  \label{propwf2} $WF(\alpha u+\beta v)\subset WF(u)\cup WF(v)$ for $u,v\in\mathcal{D'}(\R^n)$, $\alpha, \beta\in \C$. \newline
			\item \label{propwf3} If $P$ is any differential operator with smooth coefficients, then 
			\begin{align*}
				WF(Pu)\subset WF(u)  ,
			\end{align*}
			for any $u\in\mathcal{D'}(\R^n)$.
		\end{enumerate}$\,$\\
		Before turning to quantum field theory and applying the tools of microlocal analysis, we define $\mathcal{N}$ to be the bundle of nonzero null covectors on $\mathcal{M}$:
		\begin{align*}
			\mathcal{N}=\{(x,\xi)\in T^*M: \xi\, \text{a non-zero null at}\, p\}.
		\end{align*}
		The first application is the realization that the wavefront set of the Minkowski vacuum $\omega_2^{vac}$ is a subset of the product $\mathcal{N}^+\times\mathcal{N}^-$, i.e.\ 
		\begin{align*}
			WF(\omega_2^{vac})\subset \mathcal{N}^+\times\mathcal{N}^-,
		\end{align*}
		where 
		\begin{align*}
			\mathcal{N}^{\pm}=\{(p,\xi)\in\mathcal{N}:\xi\, \text{is future}(+)/\text{past}(-) \text{directed}\}.
		\end{align*}
		Since for QFT in globally hyperbolic spacetimes the singularity structure of the state has to be analog to the one given in the Minkowski case (due to the equivalence principle), we elevate this rule to a general context and give the following definition, \cite[Defintion 6.2]{CF} (see as well \cite{RAD, RAD2, IN3}),
		\begin{definition}
			A state $\omega$ obeys the \textbf{Microlocal Spectrum Condition} ($\mu SC$) if 
			\begin{align*}
				WF(\omega_2)\subset \mathcal{N}^+\times\mathcal{N}^-.
			\end{align*}
		\end{definition}
		Intuitively this condition states that the singular behaviour of the two-point function is of positive frequency in the first slot and negative in the second. This corresponds to the well-known frequency splitting that is given in the Minkowski case. We further state another important result in the context of states in QFT.
		\begin{theorem}\label{thm:cinf}
			If $\omega$ and $\omega'$ obey $\mu SC$ then
			\begin{align}
				\omega_2-\omega'_2\in C^{\infty}( \mathcal{M} \times  \mathcal{M}),
			\end{align}
			i.e.\, the $\mu SC$ determines an equivalence of class of states under equality of the two-point functions modulo $C^{\infty}$.
		\end{theorem}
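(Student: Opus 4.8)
The plan is to combine two elementary inputs: the sub-additivity of the wavefront set under linear combinations, and the fact that the antisymmetric part of any two-point function is fixed by the canonical commutation relations and hence is independent of the state. Set $w := \omega_2 - \omega'_2 \in \mathcal{D}'(M\times M)$. By property~\ref{propwf2} of the wavefront set one has $WF(w)\subset WF(\omega_2)\cup WF(\omega'_2)$, so if both $\omega$ and $\omega'$ satisfy the $\mu SC$ we immediately obtain $WF(w)\subset \mathcal{N}^+\times\mathcal{N}^-$.

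Next I would observe that $w$ is symmetric under exchange of its two arguments. From the listed properties of the two-point function one has $\omega_2(F,F')-\omega_2(F',F)=iE(F,F')$, and the identical relation holds for $\omega'_2$ with the \emph{same} causal propagator $E$, since $E$ is determined by $(M,g)$ alone. Subtracting, $w(F,F')=w(F',F)$; equivalently $\varrho^*w=w$, where $\varrho:(x,y)\mapsto(y,x)$ is the flip diffeomorphism of $M\times M$. Because the wavefront set transforms covariantly under diffeomorphisms, $WF(w)$ is invariant under the induced involution of $T^*(M\times M)\setminus 0$ that sends $(x,\xi;y,\eta)$ to $(y,\eta;x,\xi)$. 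Hence $WF(w)\subset (\mathcal{N}^+\times\mathcal{N}^-)\cap(\mathcal{N}^-\times\mathcal{N}^+)$.

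Finally, this intersection is empty: a point in it would require a covector over some $x\in M$ lying simultaneously in $\mathcal{N}^+$ and $\mathcal{N}^-$ at $x$, but a nonzero null covector in a time-oriented spacetime is either future- or past-directed and never both, so $\mathcal{N}^+\cap\mathcal{N}^-=\emptyset$. (It matters here that the $\mu SC$ as stated already rules out covectors with a vanishing slot, since $\mathcal{N}^\pm$ consists of nonzero null covectors.) Therefore $WF(w)=\emptyset$, and a distribution with empty wavefront set is smooth; thus $w=\omega_2-\omega'_2\in C^\infty(M\times M)$, which is the assertion, and it exhibits the $\mu SC$ as fixing the two-point function modulo $C^\infty$.

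The argument is short, so the hard part, such as it is, will be a couple of points that need care rather than a single obstacle: invoking the covariance of the wavefront set under the flip $\varrho$ — a standard fact of microlocal analysis not explicitly recorded in the preliminaries above, which I would simply cite — and the sign bookkeeping in the cone conditions, i.e.\ checking that the flipped set is $\mathcal{N}^-\times\mathcal{N}^+$ rather than $\mathcal{N}^+\times\mathcal{N}^-$. One may also note, using property~\ref{propwf3} together with the fact that $w$ is a bisolution of $P=\square_g+m^2+\xi R$, that $WF(w)\subset\mathcal{N}\times\mathcal{N}$ in any case, but this refinement is unnecessary here since the $\mu SC$ supplies the stronger localization directly.
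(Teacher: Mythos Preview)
Your argument is correct and is precisely the standard proof of this result: use sub-additivity of the wavefront set together with the $\mu SC$ to place $WF(\omega_2-\omega'_2)$ in $\mathcal{N}^+\times\mathcal{N}^-$, use that the antisymmetric parts of $\omega_2$ and $\omega'_2$ both equal $\tfrac{i}{2}E$ so that the difference is symmetric, and then exploit the flip-invariance of $WF(w)$ to force $WF(w)\subset(\mathcal{N}^+\times\mathcal{N}^-)\cap(\mathcal{N}^-\times\mathcal{N}^+)=\emptyset$. The bookkeeping you flag (covariance of $WF$ under the diffeomorphism $\varrho$, and that the flipped cone is $\mathcal{N}^-\times\mathcal{N}^+$) is handled correctly.

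As for comparison with the paper: the paper does not actually supply a proof of this theorem. It is presented in Section~3.4 as part of a ``recollection of definitions and theorems about microlocal analysis,'' quoted from the literature (in particular \cite[Chapter~6]{CF} and the Radzikowski papers), and is followed only by a discussion of its consequences, not an argument. Your write-up therefore fills in exactly the proof the cited references give, and nothing in it conflicts with the paper.
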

		In particular, if we start with a field algebra (defined in the former section) and pick two states $\omega$ and $\omega'$ to represent the algebra (using the GNS-construction) it might happen that these two states are not unitary equivalent. Hence, they might lead to different (or nonequivalent) representations of the  field algebra. Yet, the significant physical content of a QFT can is the two-point function. Hence, if the two different two point functions of the different states $\omega$ and $\omega'$ fulfill the $\mu SC$ then their difference is smooth, which equates their corresponding wavefront set. This fact is  seen by the properties of the wavefront set, i.e.\
		\begin{align*}
			WF(\omega-\omega')=0,\qquad \qquad WF(\omega)= WF(\omega').
		\end{align*}
		In addition to the advantages of the wavefront set we state a theorem by Radzikowski (\cite{RAD, RAD2}) about the connection to the Hadamard condition.
		\begin{theorem}\label{thm:msceqhc}
			The $\mu SC$ is equivalent to the \textbf{Hadamard condition}.
		\end{theorem}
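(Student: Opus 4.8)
The statement is Radzikowski's theorem \cite{RAD, RAD2}, and the plan is to prove the two implications separately; the computation of the wavefront set of the Hadamard parametrix $H_{0^+}$ of Equation~\eqref{Z} is the ingredient common to both, so I would carry it out first. Each summand of $H_\epsilon$ has the form $u(x,y)\,\sigma_\epsilon(x,y)^{-1}$ or $v(x,y)\log\!\left(\sigma_\epsilon(x,y)/\lambda^2\right)$ with $u,v$ smooth and $\sigma_\epsilon$ carrying the $\mathi\epsilon$-prescription~\eqref{sigma}, hence is the distributional boundary value, as $\epsilon\to 0^+$, of a function holomorphic on the region where the imaginary part of $\sigma_\epsilon$ keeps a fixed sign. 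By H\"ormander's theorem on wavefront sets of boundary values of holomorphic functions, $WF(H_{0^+})$ consists of pairs of null covectors sitting over null-separated points and parallel-transported into one another along the connecting geodesic (since $\sigma$ is the Synge world function), and the sign carried by $T(x)-T(y)$ in~\eqref{sigma} orients them so that $WF(H_{0^+})\subset\mathcal{N}^+\times\mathcal{N}^-$. For a general smooth (non-analytic) metric the geometric Hadamard series is only asymptotic, so strictly one should replace $H_{0^+}$ by a genuine parametrix for $P=\square_M+m^2+\xi R$ built with H\"ormander's parametrix calculus; the propagation of singularities theorem then shows this object is unique modulo $C^\infty$ and carries the same wavefront set.

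Given this, the implication ``Hadamard $\Rightarrow\mu SC$'' is immediate: if $\omega$ is Hadamard, Definition~\ref{def_HadamardFormScalar} gives $\omega_2=H_{0^+}+w$ with $w\in C^\infty(C\times C)$ on every normal neighbourhood, so by property~\ref{propwf1} and the subadditivity of $WF$ under sums (property~\ref{propwf2}) one obtains $WF(\omega_2)=WF(H_{0^+})\subset\mathcal{N}^+\times\mathcal{N}^-$ on each patch, hence everywhere, since the wavefront set is a local object.

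For the converse, ``$\mu SC\Rightarrow$ Hadamard'', set $w:=\omega_2-H_{0^+}$ on a convex normal neighbourhood. Two inputs suffice. First, $WF(w)\subset\mathcal{N}^+\times\mathcal{N}^-$, by the $\mu SC$, the parametrix computation above, and property~\ref{propwf2}. Second, $w$ is symmetric modulo $C^\infty$: applying the already-proven implication to any Hadamard state $\omega'$ (one exists by \cite{HS1, SV01}) gives $\omega'_2-H_{0^+}\in C^\infty$, so the antisymmetric part of $H_{0^+}$ agrees modulo $C^\infty$ with that of $\omega'_2$, which by the commutator identity $\omega'_2(x,y)-\omega'_2(y,x)=\mathi E(x,y)$ equals $\tfrac{\mathi}{2}E(x,y)$; the same identity for $\omega_2$ then forces $w(x,y)-w(y,x)\in C^\infty$. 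Hence $WF(w)$ is invariant under the flip $(x,k;y,l)\mapsto(y,l;x,k)$, so it is contained in $(\mathcal{N}^+\times\mathcal{N}^-)\cap(\mathcal{N}^-\times\mathcal{N}^+)$; but a null covector at a point cannot be simultaneously future- and past-directed, so this intersection is empty, $WF(w)=\emptyset$, and $\omega_2$ has the Hadamard form of Definition~\ref{def_HadamardFormScalar}.

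The elementary part of the argument is thus the symmetry trick together with the bookkeeping for $WF$; the main obstacle is the parametrix step. Establishing that $H_{0^+}$ (or its smooth-category replacement) is a well-defined distribution with wavefront set contained in $\mathcal{N}^+\times\mathcal{N}^-$ requires the full microlocal machinery --- the analysis of the ``Feynman-type'' distributions $\sigma_\epsilon^{-1}$ and $\log\sigma_\epsilon$, H\"ormander's parametrix calculus for $P$, and the propagation of singularities theorem to fix the parametrix modulo $C^\infty$ --- and one also relies on the nontrivial fact that at least one Hadamard state exists at all \cite{HS1, SV01}. Once these are granted, the proof closes as above.
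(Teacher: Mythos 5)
The paper does not prove this statement at all: it is quoted verbatim as Radzikowski's theorem with a pointer to \cite{RAD, RAD2}, so there is no in-paper argument to compare yours against. Your outline is, in structure, the standard proof from the literature and is correct as a sketch: the computation of $WF(H_{0^+})$ via the $\mathi\epsilon$-prescription in $\sigma_\epsilon$ and boundary values of functions holomorphic where $\Im\sigma_\epsilon$ has a fixed sign, giving $WF(H_{0^+})\subset\mathcal{N}^+\times\mathcal{N}^-$; the easy direction by subadditivity of the wavefront set under sums with a $C^\infty$ remainder; and the converse by the flip trick, using that the antisymmetric part of $\omega_2$ is fixed to $\tfrac{\mathi}{2}E$ by the CCR so that $w=\omega_2-H_{0^+}$ is symmetric modulo $C^\infty$ and its wavefront set must lie in the empty intersection $(\mathcal{N}^+\times\mathcal{N}^-)\cap(\mathcal{N}^-\times\mathcal{N}^+)$. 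Your device of extracting the antisymmetric part of $H_{0^+}$ modulo $C^\infty$ from an auxiliary Hadamard state (whose existence is a genuine external input, via the deformation argument) is legitimate, though one can also read it off directly from the Hadamard recursion relations. The one place where the sketch leans on unproved heavy machinery is exactly where you say it does: establishing that $H_{0^+}$, or its replacement by a true parametrix for $P$ when the Hadamard series fails to converge for smooth non-analytic metrics, is a well-defined distribution with the claimed wavefront set requires the Duistermaat--H\"ormander theory of distinguished parametrices and propagation of singularities; since you flag this explicitly rather than assert it, I regard the proposal as an honest and essentially correct reconstruction of the cited theorem rather than as containing a gap.
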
	
		Hence, given a state that is Hadamard, which is  proven to exist for all massive Klein-Gordon theories,  the theorem tells us that all such states fulfill the physically reasonable microlocal spectrum condition.

		\section{Deformation of QFT in Globally Hyperbolic Spacetimes}
		\subsection{Representation of the Deformation}
		As before we start with a globally hyperbolic spacetime $(\mathcal{M},g)$, a $*$-algebra $\mathscr{A} = \mathscr{A}(\mathcal{M},g)$ generated by the field operators that satisfy the Klein Gordon equation and the choice of a state $\omega:\mathscr{A}(\mathcal{M},g)\rightarrow\C$, i.e. a normalized positive linear functional.  This state then determines as described in Subsection \ref{sexgns} the   quadruple $(\mathcal{H}_{\omega},D_{\omega},\pi_{\omega},\Psi_{\omega})$. The deformation, w.r.t.\ the noncommutative Rieffel product, enters the framework not via the states or the representation but via changing the point-wise product of the representations of the algebra. Summarizing, the \textit{deformed} $2$-point function is given in the following definition.  
		\begin{definition}\label{def:2ptfct}
			For a $*$-algebra $\mathscr{A} = \mathscr{A}(\mathcal{M},g)$ defined on a globally hyperbolic spacetime $(\mathcal{M},g)$ generated by Klein-Gordon fields, the (formal) deformed $2$-point function  is defined as the expectation value of the generalized Rieffel product (Defintion \ref{def:defpro1})  of the test functions  of the representations of the fields, i.e.\
			\begin{align*}
				\omega^{\Theta}_2( \phi(F_1)    \phi(F_2) )&:=\langle \Psi_{\omega}|\,
				\pi_{\omega}(\phi(F_1))\star_\theta  \pi_{\omega}(\phi(F_2)) \,\Psi_{\omega} \rangle\\&= \int  F_1(x_1)\star_\theta  F_2(x_2)\,
				\langle  \Psi_{\omega}|\,\pi_{\omega}(\phi(x_1))\pi_{\omega}(\phi(x_2))\,\Psi_{\omega} \rangle 
				\,d{\rm vol}_g(x_1)\,d{\rm vol}_g(x_2)   \\&=   \iint \lim_{\epsilon \to 0} \iint {\chi}(\epsilon X, \epsilon Y) \, \alpha_{\Theta \,  \delta({X_1,X_2})  X}(F_1(X_1))  \, \alpha_{Y}(F_2(X_2))  \mathe^{- {\mathi} \, (  X,    Y )_{X_2}   }   \total^N X \total^N Y  \\&\times\omega_2( \phi( X_1 )    \phi( X_2 ) )   \, \total^NX_1 \total^N X_2 \eqend{.}
			\end{align*}       
			for test-functions $F_1,F_2 \in C_0^\infty(\mathcal{M})$.
		\end{definition}
		
		\begin{theorem}
			The deformed  smeared  two-point function  given in Definition \ref{def:2ptfct}
			is       for functions $(F_1,F_2)\in \mathcal{D}^{m}_{0,\rho,\delta} (\mathcal{M})$ 
			well-defined.
		\end{theorem}
		\begin{proof}
			To prove that the oscillatory integral is well-defined  
			we write the (smeared) deformed two-point function as follows.
			\begin{equation}
				\label{eq:thm_scalarprod} 
				\omega^{\Theta}_2( \phi(F_1)    \phi(F_2) )     =  \lim_{\epsilon \to 0} \iint \chi(\epsilon X,\epsilon Y) \, b_\Theta(X,Y,F_1,F_2) \, \mathe^{-\mathi X \cdot Y} \total^N X \total^N Y,
			\end{equation}
			with $b_\Theta(X,Y,F_1,F_2)$ formally given by
			\begin{equation}
				b_\Theta(X,Y,F_1,F_2) = \iint\,\alpha_{\Theta \,  \delta({X_1,X_2})  X}(F_1(X_1))  \, \alpha_{Y}(F_2(X_2))  \, \,  
				\langle \Psi_{\omega}|\,
				\phi(X_1 ) \phi(X_2 )  \,\Psi_{\omega} \rangle\, \total^N X_1 \total^N X_2\eqend{.}
			\end{equation}
			Following Ref.~\cite[Proposition 1.1.11, Lemma 1.2.1, Proposition 1.2.2 ]{HOSC}  the oscillatory integral is well defined if we prove that the function $b_\Theta(X,Y,F_1,F_2)$ belongs to the symbol space and thus we have for   $\abs{ b_\Theta(X,Y,F_1,F_2) }$ the following equality
			\begin{align*}
				&   \abs{ \iint\, F_1(R_{X_1}( \Theta\,   \delta({X_1,X_2}) X))  \, F_2(R_{X_2}( Y)) \,       
					\langle \Psi_{\omega}|\,
					\phi(X_1 ) \phi(X_2 )  \,\Psi_{\omega} \rangle\, \total^N X_1 \total^N X_2} \\
				&=   \abs{ \iint  \, F_1(R_{X_1}( \Theta\,   \delta({X_1,X_2}) X))  \, F_2(R_{X_2}( Y)) \,      
					\langle \phi(X_1 )\Psi_{\omega}|\,
					\phi(X_2 )  \,\Psi_{\omega} \rangle\, \total^N X_1 \total^N X_2}
				\\
				&\leq C  \norm{\alpha_{\Theta \,  \delta  X}(F_1) \alpha_Y(F_2))}_{\infty}
				\eqend{,}
			\end{align*}
			where we used the Cauchy-Schwarz inequality, the conclusion in \cite[Proof Proposition III.3]{MF}
			and the  supremum-norm $\norm{G }_{\infty}$ of the product of the product of the translated and retracted test-functions $G_{X,Y}:=\alpha_{\Theta \,  \delta  X}(F_1) \alpha_Y(F_2))$ determined by
			\begin{align*}
				\norm{ G_{X,Y} }_{\infty}= \sup_{X_1,X_2\in\mathcal{M}} \vert G_{X,Y}(X_1, X_2) \vert ,
			\end{align*}
			where  we write $X_1,X_2\in\mathcal{M}$ for $X_1,X_2\in\R^N:F(X_{1,2})=0$, where $F$ is the local defining function. Since we choose $G_{X,Y}\in\mathcal{D}^{m}_{0,\rho,\delta} (\mathcal{M})$ the  absolute value of the function $b_\Theta(X,Y,F_1,F_2)$ and derivatives thereof are bounded  such that $b$ is a symbol and therefore the Integral \ref{eq:thm_scalarprod} is well-defined.

		\end{proof}

		\subsection{The Microlocal Spectrum Condition under Deformation Quantization}
		In this section we prove that the micro-local spectrum condition   holds for deformed states, if the undeformed state is Hadamard.  This will be proven up to the second order in the deformation parameter (see the expansion given in Proposition \ref{prop:defpro2}).

		\begin{lemma}\label{lem:2ptfct}
			The deformed two point function defined by the star product in Proposition \ref{prop:defpro2} is,  up to second order in $\Theta$, explicitly presented by 
			\begin{align*}
				\omega^{\Theta}_2(X_1,X_2) &= P^{\Theta}  
				\omega _2(X_1,X_2)
				\\&=\omega _2(X_1,X_2) -i \,\partial_A\partial_{B'} (\Theta^{AB'} ({X_1,X_2})  \omega _2(X_1,X_2))
				\\&\qquad\qquad \qquad\qquad  -
				\nabla_A\,\partial_B  \nabla_{C'}  \partial_{D'}  ( \Theta^{ABC'D'} ({X_1,X_2})  \omega _2(X_1,X_2)),  
			\end{align*} 
			where $P^{\Theta}$ is a  fourth-order  differential operator with smooth coefficients, that depend  on the Poisson bivector, the orthogonal projection and the parallel transport. 
		\end{lemma}
		\begin{proof}
			Using Definition \ref{def:2ptfct}, Proposition \ref{prop:defpro2} and assuming that the functions $(F_1,F_2)\in\mathcal{D}^{m}_{\rho,\delta}(\ma)$ we write the smeared deformed two-point function as
			\begin{align*}
				\omega^{\Theta}_2( \phi(F_1)    \phi(F_2) )&:=\langle \Psi_{\omega}|\,
				\pi_{\omega}(\phi(F_1))\star_\theta  \pi_{\omega}(\phi(F_2)) \,\Psi_{\omega} \rangle\\&= \int  F_1(X_1)\star_\theta  F_2(X_2)\,\omega_2( \phi( X_1 )    \phi( X_2 ) )  \, \total^NX_1 \total^N X_2 \\&=: \int  F_1(X_1)   F_2(X_2)\,\omega^{\Theta}_2( \phi( X_1 )    \phi( X_2 ) )  \, \total^NX_1 \total^N X_2 \\&=\int  F_1(X_1)   F_2(X_2)\,\biggl(\omega _2(X_1,X_2) -i \,\partial_A\partial_{B'} (\Theta^{AB'} ({X_1,X_2})  \omega _2(X_1,X_2))
				\\&-
				\nabla_A\,\partial_B  \nabla_{C'}  \partial_{D'}  ( \Theta^{ABC'D'} ({X_1,X_2})  \omega _2(X_1,X_2))
				\biggr)+\mathcal{O}(\theta^3)\\&=\int  F_1(X_1)   F_2(X_2)\,P^{\Theta}\,\omega _2(X_1,X_2)  +\mathcal{O}(\theta^3),
			\end{align*}
			where in the last lines we inserted the explicit form of the deformed product given in  Proposition \ref{prop:defpro2}   and     we integrated by parts. The smoothness of the differential operator $P$ follows from the smoothness of the Poisson bivector (see Definition \ref{def:poisbivec}), the orthogonal projection (see Lemma \ref{lem:orthproj}) and the parallel transport. 
		\end{proof}

		\begin{theorem}\label{thm:defmsc} 
			Let the state $\omega_2$ obey the microlocal spectrum condition. Then, the deformed state $\omega^{\Theta}$   obeys the microlocal spectrum condition up to second order in $\Theta$, i.e.\ 
			\begin{align*}
				WF(\omega^{\Theta}_2)\subset \mathcal{N}^+\times\mathcal{N}^-,
			\end{align*}
			where the deformed two-point function $\omega^{\Theta}_2$ is given explicitly in Lemma \ref{lem:2ptfct}. 
		\end{theorem}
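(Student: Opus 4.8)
The plan is to reduce the deformed two-point function to the undeformed one via the pseudodifferential representation established in Theorem~\ref{thmdpx}, and then track the wavefront set through that operation. Writing $\omega^{\Theta}_2(\phi(x_1)\phi(x_2)) = \exp\!\big(\mathi\Theta^{\rho\sigma}(x_1,x_2)\partial_{x_1^\rho}\partial_{x_2^\sigma}\big)\,\omega_2(x_1,x_2)$, I would first argue that this exponential of a differential operator, with \emph{smooth} (spacetime-dependent) coefficients $\Theta^{\rho\sigma}(x_1,x_2)=\Theta^{\mu\nu}J^{\rho}_{\ \mu}(x_1)J^{\sigma}_{\ \nu}(x_2)$ coming from the Jacobians of the embedding, defines a properly supported pseudodifferential operator of order $\le 0$ on $M\times M$ acting on the distribution $\omega_2$ (or, equivalently, that at each finite Taylor order one has a differential operator with smooth coefficients, and the oscillatory-integral estimates from the previous Proposition control the sum). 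The key microlocal input is then property~\ref{propwf3} of the wavefront set, together with its standard generalization: a (properly supported) pseudodifferential operator $A$ satisfies $WF(Au)\subset WF(u)$ for every $u\in\mathcal{D}'$. Applying this with $u=\omega_2$ gives $WF(\omega^{\Theta}_2)\subset WF(\omega_2)$.

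Second, since $\omega$ obeys the $\mu$SC by hypothesis, $WF(\omega_2)\subset\mathcal{N}^+\times\mathcal{N}^-$, and hence $WF(\omega^{\Theta}_2)\subset\mathcal{N}^+\times\mathcal{N}^-$ as well, which is exactly the claim. To make the first step fully rigorous I would work in a normal neighborhood where the embedding coordinates $X^\mu$ are honest coordinates, pass to the symbol of $\exp\!\big(\mathi\Theta^{\mu\nu}\partial_{X_1^\mu}\partial_{X_2^\nu}\big)$ in the $(X_1,X_2)$ chart — there it is literally the Fourier multiplier $\exp(\mathi\Theta^{\mu\nu}\xi_{1\mu}\xi_{2\nu})$, a bounded (though not decaying) symbol, so the operator is of order $0$ and bounded — and then note that the Jacobians only conjugate by a diffeomorphism, under which wavefront sets transform covariantly (the cotangent-bundle formulation), so the inclusion $WF(\omega^{\Theta}_2)\subset WF(\omega_2)$ is coordinate-independent and the conclusion is intrinsic on $M\times M$.

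The main obstacle I anticipate is the justification that the \emph{infinite} series $\exp\!\big(\mathi\Theta^{\rho\sigma}(x_1,x_2)\partial_{x_1^\rho}\partial_{x_2^\sigma}\big)$ genuinely defines an operator on distributions that does not enlarge the wavefront set, rather than only a formal power series: the symbol $\exp(\mathi\Theta^{\mu\nu}\xi_{1\mu}\xi_{2\nu})$ is not classical (it does not lie in any $S^m_{\rho,\delta}$ in the joint variables because of the cross term $\xi_{1\mu}\xi_{2\nu}$), so one cannot simply invoke the off-the-shelf $\Psi$DO mapping property. I would handle this by exploiting the directional structure: on a conic neighborhood of any point of $\mathcal{N}^+\times\mathcal{N}^-$ the relevant covectors are separately nonzero in each slot, and the deformed integral kernel differs from the undeformed one only by a rapidly-decaying-in-one-slot correction when tested against a cutoff — equivalently, one shows directly from Definition~\ref{def:2ptfct} and the absolute convergence estimates of the preceding Proposition that, after multiplying by test functions $\varphi_1(x_1)\varphi_2(x_2)$ and Fourier-transforming, rapid decay of $\widehat{\varphi_1\varphi_2\,\omega_2}$ in a direction $(k_1,k_2)$ outside $\mathcal{N}^+\times\mathcal{N}^-$ forces the same rapid decay for $\widehat{\varphi_1\varphi_2\,\omega^{\Theta}_2}$, because the deformation acts as a unitary phase $\mathe^{\mathi\Theta^{\mu\nu}k_{1\mu}k_{2\nu}}$ on the Fourier side in the embedding chart. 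This is the curved-space analogue of the elementary flat-space fact that warped convolution only inserts momentum-dependent phases, and it is where the embedding formalism does the real work.
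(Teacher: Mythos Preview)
Your approach is the same as the paper's: rewrite $\omega^{\Theta}_2 = P\,\omega_2$ via Theorem~\ref{thmdpx} with $P=\exp\!\big(\mathi\Theta^{\rho\sigma}(x_1,x_2)\partial_{x_1^\rho}\partial_{x_2^\sigma}\big)$, observe that the coefficients are smooth because the embedding is smooth, and then invoke the wavefront-set inclusion $WF(P\omega_2)\subset WF(\omega_2)\subset\mathcal{N}^+\times\mathcal{N}^-$.

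The difference is one of care. The paper's proof is three lines: it simply declares $P$ to be ``a differential operator with smooth coefficients'' and applies Property~\ref{propwf3} directly, without any discussion of the infinite-order issue. You, by contrast, correctly flag that $\exp(\mathi\Theta^{\mu\nu}\xi_{1\mu}\xi_{2\nu})$ is not a classical symbol in any $S^m_{\rho,\delta}$ and that the off-the-shelf $\Psi$DO mapping property does not apply verbatim; your proposed remedy (work in the embedding chart where the operator is a pure Fourier-multiplier phase of modulus one, then argue rapid decay is preserved direction by direction) is a genuine strengthening of the argument rather than a deviation from it. The paper does not engage with this point at all, so your extra paragraph is not redundant---it fills a gap the paper leaves open.
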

		\begin{proof}
			By using  the smoothness of the differential operator $P^{\Theta}$ (see Lemma \ref{lem:2ptfct}), the following chain follows from Property \ref{propwf3} of the wavefront set and the fact that $\omega_2$ fulfills the microlocal spectrum condition  
			\begin{align*}
				WF(\omega^{\Theta}_2) =  WF(P^{\Theta}\,\omega_2) \subset WF(\omega_2) \subset \mathcal{N}^+\times\mathcal{N}^-.
			\end{align*}
			
		\end{proof}
		\begin{corollary}
			The deformed two-point function $\omega^{\Theta}_2$ is Hadamard.
		\end{corollary}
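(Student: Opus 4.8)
The plan is to derive the Corollary directly by chaining Theorem~\ref{thm:defmsc} with Radzikowski's theorem, Theorem~\ref{thm:msceqhc}. First I would note that the deformed two-point function $\omega_2^{\Theta}$ is a well-defined element of $\mathcal{D}'(M\times M)$: this was established in the proposition following Definition~\ref{def:2ptfct}, so it is legitimate to speak both of its wavefront set and of the Hadamard form in the sense of Definition~\ref{def_HadamardFormScalar}. Then, by Theorem~\ref{thm:defmsc}, the hypothesis that $\omega$ (equivalently $\omega_2$) obeys the microlocal spectrum condition propagates to the deformed state, giving $WF(\omega_2^{\Theta})\subset\mathcal{N}^{+}\times\mathcal{N}^{-}$. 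The mechanism is the representation $\omega_2^{\Theta}=P\,\omega_2$ from Theorem~\ref{thmdpx}, with $P=\exp\!\big(\mathi\,\Theta^{\rho\sigma}(x_1,x_2)\partial_{x_1^{\rho}}\partial_{x_2^{\sigma}}\big)$ a differential operator whose coefficients are smooth because the isometric embedding, and hence the coordinate-dependent matrix $\Theta(x_1,x_2)$, is smooth; Property~\ref{propwf3} of the wavefront set then forbids any new singular directions.

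With the $\mu$SC in hand for $\omega_2^{\Theta}$, the conclusion is immediate: Theorem~\ref{thm:msceqhc} asserts that the $\mu$SC is equivalent to the Hadamard condition, so $\omega_2^{\Theta}$ is Hadamard. If one wants the stronger statement that $\omega_2^{\Theta}$ lies in a well-defined Hadamard equivalence class, I would invoke Theorem~\ref{thm:cinf} with $\omega_2^{\Theta}$ and any reference Hadamard two-point function, which both obey the $\mu$SC, so that they differ by a smooth kernel.

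The only place where I expect care rather than computation is the status of the equation of motion: Radzikowski's sharp equivalence is usually phrased for bisolutions of $\square_g-m^{2}$, whereas for a genuinely position-dependent $\Theta(x_1,x_2)$ the operator $P$ need not commute with $\square_g$, so $\omega_2^{\Theta}$ is in general not a bisolution. My proposal is to keep the argument entirely at the microlocal level --- the wavefront-set inclusion of Theorem~\ref{thm:defmsc} together with the geometric/microlocal notion of Hadamard used in Definition~\ref{def_HadamardFormScalar} and Theorem~\ref{thm:msceqhc} as stated --- which is exactly what is needed here; the distributional well-definedness and the absence of new singular directions, the only potentially technical points, have already been dispatched in the preceding results, so no further calculation is required.
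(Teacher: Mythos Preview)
Your proposal is correct and follows exactly the paper's route: chain Theorem~\ref{thm:defmsc} (the deformed state inherits the $\mu$SC) with Theorem~\ref{thm:msceqhc} (Radzikowski's equivalence of $\mu$SC and Hadamard). The paper's own proof is the bare two-line version of this; your additional remarks on well-definedness, the mechanism via $P$, and the bisolution caveat are elaborations rather than a different strategy.
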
 
		\begin{proof}
			The fact that the deformed two-point function is Hadamard follows directly from Theorem \ref{thm:defmsc} and Theorem \ref{thm:msceqhc}. 
		\end{proof}
		\begin{theorem} \label{thm:isom}
			Let the deformed states $\omega^{\theta}_2$ and $\omega'^{\theta}_2$ be defined using either two different isomorphic embeddings or/and retractions. Then, the deformed two-point functions $\omega^{\Theta}_2$ and $\omega'^{\Theta}_2$ have the same wavefront set, i.e.
			\begin{align*}
				WF(\omega^{\Theta}_2)=WF(\omega'^{\Theta}_2).
			\end{align*}
		\end{theorem}
		\begin{proof} 
			Due to the smoothness of the embedding or retractions, the deformed two-point functions $\omega^{\Theta}_2$ and $\omega'^{\Theta}_2$ obey the $\mu$SC  and according to Theorem \ref{thm:cinf}, their difference is smooth. Hence, by Property \ref{propwf1} we have
			\begin{align*}
				WF(\omega^{\Theta}_2-\omega'^{\Theta}_2)=0
			\end{align*}
			and Property \ref{propwf2}   the equates the two wavefront sets.
		\end{proof}

		\subsection{Physical Discussion}
		The significance of a state being in Hadamard form, particularly with regard to its two-point functions, cannot be overstated. In a quantum field theory (QFT) within Minkowski space, there exists a preferred state: the vacuum state, which remains invariant under the full Poincaré group. This preferred state naturally entails a preferred choice of the Hilbert space. However, in curved spacetimes, such a preferred choice generally does not exist (except for stationary spacetimes). In fact, different choices in construction typically result in unitarily inequivalent theories. There is no clear criterion for selecting a preferred state in general. Nevertheless, the Hadamard condition on states serves as a prerequisite for a state to possess a finite stress-energy momentum tensor. Adherence to the Hadamard condition significantly restricts the range of unitarily inequivalent Hilbert space constructions within the theory. Therefore, this condition appears to be the appropriate requirement when considering QFT in curved spacetimes. In addition to these arguments, as mentioned earlier, a Hadamard state shares the same singularity structure as the state in Minkowski space, which is a quantum field theoretical compliance with the equivalence principle.
		\par
		In our findings, we established that if the original state is in Hadamard form, then its deformed counterpart also follows suit. This conclusion can be summarized as follows: the singularity pattern of a state in a quantum field theory within noncommutative spacetimes adheres to the same principles as those in Minkowski space, thereby upholding the equivalence principle in that context as well. 
		\par 
		In particular, we can write any  deformed state (where the undeformed is Hadamard, see Definition \ref{def_HadamardFormScalar}) as 
		\begin{align*}
			\omega^{\Theta}_2 = \frac{u}{\sigma}+v \ln{\sigma}+w^{\Theta},
		\end{align*}
		where $u$ and $v$ are purely geometric quantities depending on the manifold $(M,g)$, i.e.\ they are state-independent and therefore neither  depend on the deformation nor the isometric embedding chosen (see Theorem \ref{thm:isom}). The function $w^{\Theta}$ is a smooth function that contains the information of the deformed state and thus the deformation. Thus, the deformed states can as well be understood as "physically realizable" states since they fulfill the principle of local definiteness, see \cite{HaagNarn, haaglocal, verch94}.  Consequently, when assessing the deformed Einstein equations or semi-semiclassical Einstein equations, we can utilize the deformed states to define the (finite) renormalized expectation value of the stress tensor. 
		\section{Conclusion and Outlook}

		In this work,  we provided a specific extension of the Rieffel deformation for all globally hyperbolic spacetimes with a certain Poisson structure. This novel deformation technique is a symbiosis between the Kontsevich deformation \cite{Kont} (since we define a star product for Poisson manifolds), the Rieffel product, and the Fedosov quantization (due to the condition $\nabla \pi=0$). It is tailored such that it can be directly applied to quantum field theories and, in particular, to the central physical object, namely, the two-point function. 
		\\\\
		In addition to the introduction of a quantum-type feature (non-commutativity) to spacetime itself (see Proposition \ref{prop:ncspacetime}), we proved that if a state fulfills the microlocal spectral condition, i.e.\ has a certain sensible singularity structure, then its deformed counterpart will fulfill the $\mu$SC as well. Thus, by the Radzikowksi theorem we conclude that the deformed quasi-free states are  Hadamard. 
		\\\\
		This result sheds light into the  question of whether the behavior at short distances (Hadamard criterion) has a physical foundation due to the break down of the spacetime continuum at scales around the Planck length. In particular, this means that the introduction of a quantum nature (or non-commutativity) to spacetime does not violate the equivalence principle,  from a quantum field theoretical perspective.
		\\\\
		These basic deformation principles can be applied to the various achievements and insights of quantum field theory in curved spacetimes to evaluate its physical validity as being a first-order approximation to a theory of quantum gravity. In addition to introducing a quantum nature to spacetime,  the deformation quantization can be considered a first-order approximation since in the limit $\Theta \rightarrow 0$ we obtain the undeformed, i.e.\ regular,  version of QFT in CST.
		\\\\
		The first place to look for the physical consequences imposed by the deformation quantization are quantum energy inequalities. These are restrictions obtained from QFT towards the domain of violation of classical energy conditions.
		These inequalities are sensitive to the spacetime geometry, e.g. restrict certain exotic spacetimes and effects (such as wormholes, time-machines or warp drives), and take into account the quantum behaviour implied by quantum field theory. 
		Hence, implementing a quantum behavior on spacetime itself may change these inequalities in a non-trivial manner and therewith induce new physical and  measurable effects. 
		\\\\
		Although we defined a deformed product for the case of globally hyperbolic spacetimes, by using the existence of an isometric embedding, no such result  exists for  non-globally hyperbolic manifolds. These cases are in particular interesting when investigating black-holes. Although no isometric embedding will exist in general (e.g.\ in the presence of closed time-like curves), one still can define deformations for QFTs in non-globally hyperbolic spacetimes by using the very idea of how to study these theories \cite{K92}. In principle, one requires the existence of a globally hyperbolic neighborhood $N$ for every point in a given spacetime $\mathcal{M}$. Hence, the  observable algebra of the spacetime $\mathcal{M}$ restricted to $N$ is equal to the algebra one would obtain if one regards  $N$ as a globally hyperbolic spacetime in its own right. Regarding deformation, we can therefore deform the product in the globally hyperbolic neighborhood $N$ by using the embedding formalism and thus define a deformed state. 
		
		\section*{Acknowledgements}
		The author would like to thank R. Verch for various discussions on this topic. Moreover, we extend our gratitude to H. Grosse, in particular on discussions related to the quasi-free deformed states. We would further like to extend our thanks to M. Fröb whose encouragement, sharp remarks and questions shaped many ideas in this paper.  O. M\"uller is gratefully acknowledged for various discussions regarding the embedding formalism. Furthermore, the author expresses his gratitude to R. Ballal, D. Vidal-Cruzprieto and P. Dorau for various discussions that cleared the ideas set in this paper.
		Last but not least, the author thanks Nicolas Boumal for promptly answering questions in regards to optimization techniques. 
		\appendix
		
		\section{Proof of Theorem \ref{thm:defprodorder}}\label{appthm:defprodorder}
		\begin{proof}
			We prove the proposition by using the Taylor expansion of the retractions  in order of the deformation parameter $\theta$,  
			\begin{align*}
				f(R_{Z}( \Theta X))=f(Z)+(\Theta X)^A\,\nabla_Af(Z)+
				\frac{1}{2}  (\Theta X)^A  (\Theta X)^B\nabla_B\nabla_A  \,f(Z)
				+    \mathcal{O}(\theta^3),
			\end{align*}
			We use the following notation
			\begin{align}
				\nabla_A f(Z)=  \pz (\partial_A \bar{f}),  \qquad \qquad(\Theta X)^B\nabla_B\nabla_A  \,f(Z)=\pz((\Theta X)^B\partial_B\,\nabla_A f),
			\end{align}  
			plugging this into the deformation formula we have up to second order in $\theta$, 
			\begin{align*}
				&\left( f \star_\theta g \right)(z) =
				\lim_{\epsilon \to 0} \iint \chi(\epsilon X, \epsilon Y) \, f(R_{Z}( \Theta X))  \, g(R_{Z}( Y)) \,  \,\mathe^{- {\mathi} \,X \cdot\, Y  }  \total^N X \total^N Y\eqend{,}\\ &=
				\lim_{\epsilon \to 0} \iint \chi(\epsilon X, \epsilon Y) \, \biggl(f(Z)+(\Theta X)^A\,\nabla_A f (Z)+  \frac{1}{2} (\Theta X)^A \pz((\Theta X)^B\partial_B\,\nabla_A f) \biggr) \, \\& \qquad \qquad \qquad \times g(R_{Z}( Y)) \,  \,\mathe^{- {\mathi} \,X \cdot\, Y  }  \total^N X \total^N Y\eqend{,} 
			\end{align*} 
			We write  the orthogonal projector as a matrix (see Lemma \ref{lem:orthproj}) 
			and use the orthogonality of this projection rendering the former integral into
			\begin{align*}
				&\left( f \star_\theta g \right)(z)   =f (z)  g(z) -i \Theta^{AB}	 \,\nabla_A f(Z)  \,\nabla_B g(Z)    
				\\  & + \frac{1}{2} \lim_{\epsilon \to 0} \iint \chi(\epsilon X, \epsilon Y)  ( (\Theta X)^A  \pz((\Theta X)^B\partial_B\,\nabla_A f(Z) )) \, g(R_{Z}( Y)) \,  \,\mathe^{- {\mathi} \,X \cdot\, Y  }  \total^N X \total^N Y \\& =f (z)  g(z)-i \Theta^{AB}	 \,\nabla_A f(Z)  \,\nabla_B g(Z)    
				\\  & + \frac{1}{4} \lim_{\epsilon \to 0} \iint \chi(\epsilon X, \epsilon Y)  ( (\Theta X)^A    (\Theta X)^B 
				\partial_B\,\nabla_A f(Z) )) \,
				Y^D Y^F \, 
				\partial_F\,\nabla_D g(Z) ))
				\,  \,\mathe^{- {\mathi} \,X \cdot\, Y  }  \total^N X \total^N Y  \\& =f (z)  g(z)-i \Theta^{AB}	 \,\nabla_A f(Z)  \,\nabla_B g(Z)    
				-\frac{1}{4}
				\,  
				\partial_B\,\nabla_A f(Z)   \,   
				\partial_C\nabla_D g(Z)    (\Theta^{AC}\Theta^{BD}+\Theta^{AD}\Theta^{BC})
				\\& =f (z)  g(z)-i \Theta^{AB}	 \,\nabla_A f(Z)  \,\nabla_B g(Z) - \frac{1}{4} (
				\Theta^{AC}\Theta^{BD}+\Theta^{AD}\Theta^{BC}) \, \partial_A\nabla_B f(Z)\, \partial_C\nabla_D g(Z)    .
			\end{align*}
			Using Lemma \ref{lem:thetproj} and the following relation
			\begin{align*}
				\partial_C\nabla_Dg(Z)=  \partial_C \partial_D\bar{g}(Z)+(\partial_C P_D^{\,\,E})\partial_E \bar{g}(Z)\eqend{,} 
			\end{align*}
			we see that
			\begin{align*}
				\Theta^{AD}\Theta^{BC}\partial_C\nabla_D g(Z)  = \Theta^{AC}\Theta^{BD}\partial_C\nabla_D g(Z).
			\end{align*}
			Therefore, the deformed product of two functions is given up to second order in the deformation parameter as, 
			\begin{align*}
				\left( f \star_\theta g \right)(z)   &=f    g -i \Theta^{AB}	 \,\partial_A f  \,\partial_B g  - \frac{1}{4} (
				\Theta^{AC}\Theta^{BD}+\Theta^{AD}\Theta^{BC})
				\partial_A\nabla_B f \, \partial_C\nabla_D g \\
				&=f    g -i \Theta^{AB}	 \,\partial_A f  \,\partial_B g  - \frac{1}{2} 
				\Theta^{AC}\Theta^{BD} 
				\partial_A\nabla_B f \, \partial_C\nabla_D g .
			\end{align*} 
			Next, we prove associativity where we first consider 
			\begin{align*}
				&   ((f\star_{\theta}g) \star_{\theta}h ) \,  = (F \star_{\theta}h) \,  \\&\,\\ & =F    h -i \Theta^{AB}	 \,\partial_A F  \,\partial_B h  - \frac{1}{2}  
				\Theta^{AC}\Theta^{BD} \,
				\partial_A\nabla_B F \, \partial_C\nabla_D h\\&\,\\ & 
				=(f    g -i \Theta^{AB}	 \,\partial_A f  \,\partial_B g  -  \frac{1}{2}  
				\Theta^{AC}\Theta^{BD} 
				\partial_A\nabla_B f \, \partial_C\nabla_D g    )    h \\&\,\\ & -i \Theta^{AB}	 \,\partial_A (f    g -i \Theta^{CD}	 \,\partial_C f  \,\partial_D g)  \,\partial_B h   \\&\,\\ & -  \frac{1}{2}  
				\Theta^{AC}\Theta^{BD} 
				\partial_A\nabla_B (fg) \, \partial_C\nabla_D h\\&\,\\ & 
				=(f    g -i \Theta^{AB}	 \,\partial_A f  \,\partial_B g  -  \frac{1}{2}  
				\Theta^{AC}\Theta^{BD} \,
				\partial_A\nabla_B f \, \partial_C\nabla_D g    )    h \\&\,\\ & -i \Theta^{AB} (\partial_Af   	 \, g+f  \,\partial_A   g -i \partial_A\Theta^{CD}	 \,\partial_C f  \,\partial_D g-i \Theta^{CD}	 \,\partial_A\partial_C f  \,\partial_D g-i \Theta^{CD}	 \,\partial_C f  \,\partial_A\partial_D g)  \,\partial_B h   \\&\,\\ & -  \frac{1}{2}  
				\Theta^{AC}\Theta^{BD} 
				( \partial_A\nabla_B f\,g+\partial_Af\,
				\partial_B g+
				\partial_B f\,\partial_Ag+f\, \partial_A\nabla_B g) \, \partial_C\nabla_D h\eqend{,} 
			\end{align*} 
			Next, we consider 
			\begin{align*}
				&   (f\star( g \star_{\theta}h )) \,  = (f \star_{\theta}G) \, 
				\\&\,\\ &= 
				f    G -i \Theta^{AB}	 \,\partial_A f  \,\partial_B G  -  \frac{1}{2}  
				\Theta^{AC}\Theta^{BD} 
				\partial_A\nabla_B f \, \partial_C\nabla_D G     \\&\,\\ &= 
				f   (g    h -i \Theta^{AB}	 \,\partial_A g  \,\partial_B h -  \frac{1}{2}  
				\Theta^{AC}\Theta^{BD} 
				\partial_A\nabla_B g \, \partial_C\nabla_D h )  
				\\&\,\\ &-i \Theta^{AB}	 \,\partial_A f  \,\partial_B (g    h -i \Theta^{CD}	 \,\partial_C g  \,\partial_D h  )  
				\\&\,\\ & -  
				\Theta^{AC}\Theta^{BD} 
				\partial_A\nabla_B f \, \partial_C\nabla_D (g    h  )     \\&\,\\ &= 
				f   (g    h -i \Theta^{AB}	 \,\partial_A g  \,\partial_B h -  \frac{1}{2}  
				\Theta^{AC}\Theta^{BD} 
				\partial_A\nabla_B g \, \partial_C\nabla_D h )  
				\\&\,\\ &-i \Theta^{AB}	 \,\partial_A f  \, (\partial_B g  \,  h+g   \,\partial_B h -i \partial_B \Theta^{CD}	 \,\partial_C g  \,\partial_D h -i \Theta^{CD}	 \,\partial_B \partial_C g  \,\partial_D h -i \Theta^{CD}	 \,\partial_C g  \,\partial_B \partial_D h  )  
				\\&\,\\ & -  \frac{1}{2}  
				\Theta^{AC}\Theta^{BD} 
				\partial_A\nabla_B f \,  (\partial_C\nabla_Dg  \,  h +\partial_Cg  \,  \partial_Dh +\partial_Dg  \, \partial_C h +g  \, \partial_C\nabla_D h  )    
			\end{align*} 
			This reduces to 
			\begin{align*}
				&  ( - \Theta^{AD}   \partial_A\Theta^{CB}	 \,\partial_C f  \,\partial_B g \,\partial_D h  - \Theta^{AB} \Theta^{CD}	 \,\partial_A\partial_C f  \,\partial_D g \,\partial_B h )   -  
				\Theta^{AC}\Theta^{BD} 
				\partial_Af\,
				\partial_B g   \, \partial_C\nabla_D h
				\\&\,\\ &=
				( -  \Theta^{CA} \,   \partial_A \Theta^{BD}		 \,\partial_C f  \,\partial_B g  \,\partial_D h    -  \Theta^{AB}	\Theta^{CD}	 \,\partial_A f  \, \partial_C g  \,\partial_B \partial_D h  )  
				- 
				\Theta^{AC}\Theta^{BD} 
				\partial_A\nabla_B f \,  (\partial_Cg  \,  \partial_Dh  )    \eqend{,} 
			\end{align*}
			which further reduces to 
			\begin{align*}
				\Theta^{AB}   \partial_A\Theta^{DC}	   - 
				\Theta^{CA}\Theta^{BE}\partial_A P^{D}_{E}
				+ 
				\Theta^{BE}\Theta^{DA}
				\partial_A P^{C}_{E}   =0,  
			\end{align*}
			or using Lemma \ref{lem:thetproj} can be written as 
			\begin{align*}
				\Theta^{AB}   \partial_A\Theta^{DC}	   - 
				\Theta^{CE}\Theta^{BA}\partial_A P^{D}_{E}
				+ 
				\Theta^{BA}\Theta^{DE}
				\partial_A P^{C}_{E}   =0.
			\end{align*}
			Using Lemma \ref{lemmpi0}, this equality is equivalent to
			\begin{align*}
				\Theta^{AB}   \nabla_A\Theta^{DC}	 =0,
			\end{align*}
			hence concluding the proof of associativity to second order.

		\end{proof}
		\section{Proof of Proposition \ref{prop:defpro2}}
		\begin{proof}
			We prove the proposition by using the Taylor expansion of the retractions  in order of the deformation parameter $\theta$,  
			\begin{align*}
				f(R_{Z_1}( \Theta X))=f(Z_1)+(\Theta\,\delta({Z_1,Z_2})) X)^A\,\nabla_Af(Z_1)+
				\frac{1}{2}  (\Theta \,\delta({Z_1,Z_2}))X)^A  \nabla_{\Theta \,\delta({Z_1,Z_2})) X} \nabla_A  \,f(Z_1) \eqend{.} 
			\end{align*}
			We use the following notation
			\begin{align}
				\nabla_A f(Z_1)=  \pz (\partial_A \bar{f}(Z_1)),  \qquad \qquad \nabla_{Z} \nabla_A  \,f(Z_1)=\pza(Z^B\partial_B\,\nabla_A f(Z_1))\eqend{,} 
			\end{align}  
			plugging this into the deformation formula we have up to second order in $\theta$, 
			\begin{align*}
				& f (z_1)\star_\theta g (z_2) =
				\lim_{\epsilon \to 0} \iint \chi(\epsilon X, \epsilon Y) \, f(R_{Z_1}( \Theta \,\delta(Z_1,Z_2) \,X))  \, g(R_{Z_2}( Y)) \,  \,\mathe^{- {\mathi} \,X \cdot\, Y  }  \total^N X \total^N Y \\ &=
				\lim_{\epsilon \to 0} \iint \chi(\epsilon X, \epsilon Y) \, \biggl(f(Z_1)+(\Theta \,\delta({Z_1,Z_2}) X)^A\,\nabla_A f (Z_1)\\ &+  \frac{1}{2} (\Theta \,\delta({Z_1,Z_2}) X)^A \pza((\Theta\,\delta({Z_1,Z_2}) X)^B\partial_B\,\partial_A f(Z_1)) \biggr) \,   g(R_{Z_2}( Y)) \,  \,\mathe^{- {\mathi} \,X \cdot\, Y  }  \total^N X \total^N Y\\ &=f (z_1)  g(z_2) + \Theta^{AB}\,	\delta^{\,\,\,C'}_{B}({Z_1,Z_2}) \,\nabla_A f(Z_1)  \lim_{\epsilon \to 0} \iint \chi(\epsilon X, \epsilon Y) \,X_{C'}    \, g(R_{Z_2}( Y)) \,  \,\mathe^{- {\mathi} \,X \cdot\, Y  }  \total^N X \total^N 
				Y
				\\  & + \frac{1}{2} \lim_{\epsilon \to 0} \iint \chi(\epsilon X, \epsilon Y)  ( (\Theta \,\delta({Z_1,Z_2}) X)^A  \pza((\Theta\,\delta({Z_1,Z_2}) X)^B\partial_B\,\nabla_A f(Z_1) )) \\  &\qquad\qquad \times \, g(R_{Z_2}( Y)) \,  \,\mathe^{- {\mathi} \,X \cdot\, Y  }  \total^N X \total^N Y
				\\ &=f (z_1)  g(z_2)-i \Theta^{AB}\,	\delta^{\,\,\,C'}_{B}({Z_1,Z_2}) \,\nabla_A f(Z_1)   \,\nabla_{C'} g(Z_2)   
				\\  & + \frac{1}{4} \lim_{\epsilon \to 0} \iint \chi(\epsilon X, \epsilon Y)  ( (\Theta \,\delta({Z_1,Z_2}) X)^A  \pza((\Theta\,\delta({Z_1,Z_2}) X)^B\partial_B\,\nabla_A f(Z_1) )) \\  &\qquad\qquad \times \, 
				Y^{A'}Y^{B'}\, P^{D'}_{\,\,A'}(Z_2)\, \partial_{B'}\nabla_{D'}g(Z_2)
				\,\mathe^{- {\mathi} \,X \cdot\, Y  }  \total^N X \total^N Y
				\\ &=f (z_1)  g(z_2)-i \Theta^{AB}\,	\delta^{\,\,\,C'}_{B}({Z_1,Z_2}) \,\nabla_A f(Z_1)   \,\nabla_{C'} g(Z_2)   
				\\  & + \frac{1}{4} 
				\Theta^{AC}\,\delta_C^{\,\,\,C'}(Z_1,Z_2) \,   \Theta^{BD}\,\delta_D^{\,\,\,D'}(Z_1,Z_2) \partial_B\,\nabla_A f(Z_1)\,   P^{D'}_{\,\,A'}(Z_2)\partial_{B'}\nabla_{D'}g(Z_2) \\  &\qquad\qquad \times
				\lim_{\epsilon \to 0} \iint \chi(\epsilon X, \epsilon Y)   X_{C'}X_{D'}
				Y^{A'}Y^{B'}  
				\,\mathe^{- {\mathi} \,X \cdot\, Y  }  \total^N X \total^N Y   \\ &=f (z_1)  g(z_2)-i \Theta^{AB}\,	\delta^{\,\,\,C'}_{B}({Z_1,Z_2}) \,\nabla_A f(Z_1)   \,\nabla_{C'} g(Z_2)   
				\\  & - \frac{1}{4} 
				\Theta^{AC}\,\delta_C^{\,\,\,C'}(Z_1,Z_2) \,   \Theta^{BD}\,\delta_D^{\,\,\,D'}(Z_1,Z_2) \partial_B\,\nabla_A f(Z_1)\,   P^{E'}_{\,\,A'}(Z_2)\partial_{B'}\nabla_{E'}g(Z_2) \\  &\qquad\qquad \times
				\left(\delta^{\,\,\,A'}_{D'}\delta^{\,\,\,B'}_{C'}+\delta^{\,\,\,B'}_{D'}\delta^{\,\,\,A'}_{C'} \right)
				\\ &=f (z_1)  g(z_2)-i \Theta^{AB}\,	\delta^{C'}_{B}({Z_1,Z_2}) \,\nabla_A f(Z_1)   \,\nabla_{C'} g(Z_2)   
				\\  & - \frac{1}{4}  
				\Theta^{AC}\,\delta_C^{\,\,\,C'}(Z_1,Z_2) \,   \Theta^{BD}\,\delta_D^{\,\,\,A'}(Z_1,Z_2) \partial_B\,\nabla_A f(Z_1)\,   P^{E'}_{\,\,A'}(Z_2)\partial_{C'}\nabla_{E'}g(Z_2)
				\\  & - \frac{1}{4}  
				\Theta^{AC}\,\delta_C^{\,\,\,A'}(Z_1,Z_2) \,   \Theta^{BD}\,\delta_D^{\,\,\,C'}(Z_1,Z_2) \partial_B\,\nabla_A f(Z_1)\,   P^{E'}_{\,\,A'}(Z_2)\partial_{C'}\nabla_{E'}g(Z_2)  \\ &=f (z_1)  g(z_2)-i \Theta^{AB}\,	\delta^{\,\,\,C'}_{B}({Z_1,Z_2}) \,\nabla_A f(Z_1)   \,\nabla_{C'} g(Z_2)   
				\\  & -  \frac{1}{4}  \left( \Theta^{AC}\,  \Theta^{BD}+ \Theta^{AD}\, \Theta^{BC}  \right)
				\delta_D^{\,\,\,A'}(Z_1,Z_2)\,\delta_C^{\,\,\,C'}(Z_1,Z_2)  \,   \partial_B\,\nabla_A f(Z_1)\,   P^{E'}_{\,\,A'}(Z_2)\partial_{C'}\nabla_{E'}g(Z_2) 
				\eqend{.}
			\end{align*}  
			Next, we use  Lemma \ref{lem:thetproj} and Lemma \ref{lem:thetproj'} and simplify the terms to 
			\begin{align*}
				& f (z_1)\star_\theta g (z_2) =  f (z_1)  g(z_2)-i \Theta^{AC'} ({Z_1,Z_2}) \,\partial_A f(Z_1)   \,\partial_{C'} g(Z_2)   
				\\  &-  \frac{1}{4}  \left( \Theta^{AC'}(Z_1,Z_2) \,  \Theta^{BD'}(Z_1,Z_2) + \Theta^{AD'}(Z_1,Z_2) \, \Theta^{BC'} (Z_1,Z_2)  \right)  \,   \partial_A\,\nabla_B f(Z_1)\,  (Z_2)\partial_{C'}\nabla_{D'}g(Z_2)  \eqend{.} 
			\end{align*}
			
		\end{proof}   
		\section{Proof of Proposition \ref{prop:defpro2}}\label{approp:defpro2}
		\begin{proof}
			We prove the proposition by using the Taylor expansion of the retractions  in order of the deformation parameter $\theta$,  
			\begin{align*}
				f(R_{Z_1}( \Theta X))=f(Z_1)+(\Theta\,\delta({Z_1,Z_2})) X)^A\,\nabla_Af(Z_1)+
				\frac{1}{2}  (\Theta \,\delta({Z_1,Z_2}))X)^A  \nabla_{\Theta \,\delta({Z_1,Z_2})) X} \nabla_A  \,f(Z_1) \eqend{,} 
			\end{align*}
			We use the following notation
			\begin{align}
				\nabla_A f(Z_1)=  \pz (\partial_A \bar{f}(Z_1)),  \qquad \qquad \nabla_{Z} \nabla_A  \,f(Z_1)=\pza(Z^B\partial_B\,\nabla_A f(Z_1))\eqend{,} 
			\end{align}  
			plugging this into the deformation formula we have up to second order in $\theta$, 
			\begin{align*}
				& f (z_1)\star_\theta g (z_2) =
				\lim_{\epsilon \to 0} \iint \chi(\epsilon X, \epsilon Y) \, f(R_{Z_1}( \Theta \,\delta(Z_1,Z_2) \,X))  \, g(R_{Z_2}( Y)) \,  \,\mathe^{- {\mathi} \,X \cdot\, Y  }  \total^N X \total^N Y \\ &=
				\lim_{\epsilon \to 0} \iint \chi(\epsilon X, \epsilon Y) \, \biggl(f(Z_1)+(\Theta \,\delta({Z_1,Z_2}) X)^A\,\nabla_A f (Z_1)\\ &+  \frac{1}{2} (\Theta \,\delta({Z_1,Z_2}) X)^A \pza((\Theta\,\delta({Z_1,Z_2}) X)^B\partial_B\,\partial_A f(Z_1)) \biggr) \,   g(R_{Z_2}( Y)) \,  \,\mathe^{- {\mathi} \,X \cdot\, Y  }  \total^N X \total^N Y 
				\\ &=f (z_1)  g(z_2)-i \Theta^{AB}\,	\delta^{\,\,\,C'}_{B}({Z_1,Z_2}) \,\nabla_A f(Z_1)   \,\nabla_{C'} g(Z_2)   
				\\  & + \frac{1}{4} \lim_{\epsilon \to 0} \iint \chi(\epsilon X, \epsilon Y)  ( (\Theta \,\delta({Z_1,Z_2}) X)^A  \pza((\Theta\,\delta({Z_1,Z_2}) X)^B\partial_B\,\nabla_A f(Z_1) )) \\  &\qquad\qquad \times \, 
				Y^{A'}Y^{B'}\, P^{D'}_{\,\,A'}(Z_2)\, \partial_{B'}\nabla_{D'}g(Z_2)
				\,\mathe^{- {\mathi} \,X \cdot\, Y  }  \total^N X \total^N Y
				\\ &=f (z_1)  g(z_2)-i \Theta^{AB}\,	\delta^{\,\,\,C'}_{B}({Z_1,Z_2}) \,\nabla_A f(Z_1)   \,\nabla_{C'} g(Z_2)   
				\\  & - \frac{1}{4} 
				\Theta^{AC}\,\delta_C^{\,\,\,C'}(Z_1,Z_2) \,   \Theta^{BD}\,\delta_D^{\,\,\,D'}(Z_1,Z_2) \partial_B\,\nabla_A f(Z_1)\,   P^{E'}_{\,\,A'}(Z_2)\partial_{B'}\nabla_{E'}g(Z_2) \\  &\qquad\qquad \times
				\left(\delta^{\,\,\,A'}_{D'}\delta^{\,\,\,B'}_{C'}+\delta^{\,\,\,B'}_{D'}\delta^{\,\,\,A'}_{C'} \right)
				\\ &  =f (z_1)  g(z_2)-i \Theta^{AB}\,	\delta^{\,\,\,C'}_{B}({Z_1,Z_2}) \,\nabla_A f(Z_1)   \,\nabla_{C'} g(Z_2)   
				\\  & -  \frac{1}{4}  \left( \Theta^{AC}\,  \Theta^{BD}+ \Theta^{AD}\, \Theta^{BC}  \right)
				\delta_D^{\,\,\,A'}(Z_1,Z_2)\,\delta_C^{\,\,\,C'}(Z_1,Z_2)  \,   \partial_B\,\nabla_A f(Z_1)\,   P^{E'}_{\,\,A'}(Z_2)\partial_{C'}\nabla_{E'}g(Z_2) 
				\eqend{.}
			\end{align*}  
			Next, we use  Lemma \ref{lem:thetproj} and Lemma \ref{lem:thetproj'} and simplify the terms to 
			\begin{align*}
				& f (z_1)\star_\theta g (z_2) =  f (z_1)  g(z_2)-i \Theta^{AC'} ({Z_1,Z_2}) \,\partial_A f(Z_1)   \,\partial_{C'} g(Z_2)   
				\\  &-  \frac{1}{4}  \left( \Theta^{AC'}(Z_1,Z_2) \,  \Theta^{BD'}(Z_1,Z_2) + \Theta^{AD'}(Z_1,Z_2) \, \Theta^{BC'} (Z_1,Z_2)  \right)  \,   \partial_A\,\nabla_B f(Z_1)\,  (Z_2)\partial_{C'}\nabla_{D'}g(Z_2)  \eqend{.} 
			\end{align*}
			
		\end{proof}   
		
		\bibliographystyle{alpha}
		\bibliography{allliterature1b.bib}

	\end{document}